  \def\mdseries@tt{m}
\declaretheorem{theorem}
\declaretheorem{corollary}
\pgfplotsset{compat=newest} %
\setlist{nosep} %
\newcolumntype{R}[2]{%
    >{\adjustbox{angle=#1,lap=\width-(#2)}\bgroup}%
    l%
    <{\egroup}%
}
\renewcommand*{\cite}{\citep}
\newcommand{\textcite}[1]{\citet{#1}}
\pgfplotsset{
    every axis x label/.append style={
        alias=current axis xlabel
    },
    legend pos/outer south/.style={
        /pgfplots/legend style={
            at={%
                (%
                \@ifundefined{pgf@sh@ns@current axis xlabel}%
                {xticklabel cs:0.5}%
                {current axis xlabel.south}%
                )%
            },
            anchor=north
        }
    }
}
\newcolumntype{t}{>{\ttfamily}l}
\newcolumntype{T}{>{\ttfamily}c}
\newcolumntype{$}{>{\global\let\currentrowstyle\relax}}
\newcolumntype{^}{>{\currentrowstyle}}
\newcommand{\bigO}{\ensuremath\mathcal{O}}
\begin{document}

\title{A New Burrows Wheeler Transform Markov Distance}

\author{Edward Raff\textsuperscript{1,2,3} \and Charles Nicholas\textsuperscript{3} \and Mark McLean\textsuperscript{1} \\
\textsuperscript{1}{Laboratory for Physical Sciences}, \textsuperscript{2}{Booz Allen Hamilton}, \textsuperscript{3}{University of Maryland, Baltimore County}\\
raff\_edward@bah.com, nicholas@umbc.edu, mclean@lps.umd.edu}

\maketitle

\begin{abstract}
Prior work inspired by compression algorithms has described how the Burrows Wheeler Transform can be used to create a distance measure for bioinformatics problems. We describe issues with this approach that were not widely known, and introduce our new Burrows Wheeler Markov Distance (BWMD) as an alternative. The BWMD avoids the shortcomings of earlier efforts, and allows us to tackle problems in variable length DNA sequence clustering.%
BWMD is also more adaptable to other domains, which we demonstrate on malware classification tasks. 
Unlike other compression-based distance metrics known to us, BWMD works by embedding sequences into a fixed-length feature vector. This allows us to provide significantly improved clustering performance on larger malware corpora, a weakness of prior methods. 
\end{abstract}

\section{Introduction}

Compression algorithms can be used to measure the similarity between arbitrary sequences with little required domain knowledge or expertise. They have been used in bioinformatics\cite{Mantaci2008}, time series classification and clustering\cite{Keogh:2004:TPD:1014052.1014077}, and malware analysis \cite{Borbely2015}.  The bioinformatics and malware analysis domains can be particularly attractive for compression-based similarity measures. Both of these domains involve "short" sequences of tens of thousands of steps, and can often reach $10^8$ steps in length.  Other machine learning techniques often fail to work when dealing with sequences of such variety and length. 

In this work, we note that the Extended Burrows Wheeler Transform (EBWT) \cite{Mantaci:2005:EBW:2134629.2134645} is a compression-based distance metric designed explicitly around the Burrows Wheeler Transform (BWT) \cite{Burrows1994} algorithm for use in bioinformatics. While EBWT has been useful in that domain, we have discovered a number of weaknesses in this method that reduce its effectiveness and prevent it from being useful in other domains, such as malware detection. 

To remedy these issues, we develop a new BWT-inspired distance measure that we refer to as the Burrows Wheeler Markov Distance (BWMD). Unlike EBWT, BWMD is a valid\footnote{A distance metric is considered true, or valid, if it adheres to the properties of reflexivity, symmetry, and triangularity.} distance metric, and can scale to far larger problems that EBWT cannot tackle due to computational limits.
Compared to other compression-based distances, our BWMD is the first to work by embedding a sequence into a Euclidean vector space. This gives a significant advantage to our approach in terms of clustering and query speed. This advantage is achieved by using algorithms that are designed around Euclidean distance, like k-means, that other methods cannot 
leverage.

We will begin by reviewing related work in the compression distance space, and the needed details of the BWT, the prior method EBWT, and a related method known as LZJD, in \autoref{sec:compression_dists}. 
Next we will begin with a description of the new BWMD in \autoref{sec:bwmd}. In \autoref{sec:theoretical_results} we will develop a number of new theoretical insights, proving 1) how EBWT has dramatic failure cases that violate our intuition of how a distance measure should work, 2) that BWMD does not have these failure cases, and 3) comparing how EBWT, BWMD, and LZJD handle randomness, and 4) that BWMD has unique properties in this regard. We will then move into empirical results in \autoref{sec:dna} by comparing BWMD with EBWT on DNA sequence clustering, where we show that BWMD is able to cluster DNA sequences of varying lengths that EBWT fails to cluster in a meaningful way. In \autoref{sec:malware} we will show how BWMD is able to scale to malware classification and clustering tasks that are beyond EBWT's computational ability. Though LZJD provides better classification accuracy at this task, BWMD provides superior clustering results. Finally we will conclude in \autoref{sec:conclusion}

\section{Compression Distances} \label{sec:compression_dists}

Compression in general can be seen as one way of performing many machine learning tasks, and has deep connections to statistical methods. Following this intuition, \textcite{Li2004} introduced the Normalized Information Distance (NID) as a method of measuring similarity using compression. Given a function $K(x)$ that computes the Kolmogorov complexity (i.e., return the length of the shortest computer program that produces $x$ as output), and the associated conditional Kolmogorov complexity $K(x | y)$ (i.e., the length of the shortest computer program that produces $x$ as output given $y$ as input), the NID is a metric as defined in \eqref{eq:nid}.
The Kolmogorov complexities are uncomputable functions, making NID of no practical use. 

\begin{equation} \label{eq:nid}
\text{NID}(x,y) = \frac{\max\left(K(x | y), K(y | x)\right)}{\max\left(K(x), K(y)\right)}
\end{equation}

To remedy this situation,  \textcite{Li2004} went on to present the Normalized Compression Distance (NCD), which replaces the uncomputable $K(x)$ with $C(x)$, which returns the length of $x$ in bytes after running a compression algorithm. To approximate $K(x | y)$, the concatenation of $x$ and $y$ (denoted by $x\Vert y$) is used, giving the final form of NCD in \eqref{eq:ncd}.  The compression algorithm chosen impacts the quality of the results. The bzip and LZMA algorithms have been popular due to a combination of reasonable run time performance and generally satisfactory compression ratios. 

\begin{equation}\label{eq:ncd}
    \text{NCD}(x,y) = \frac { C ( x \Vert  y ) - \min \left( C ( x ) , C ( y ) \right) } { \max \left( C ( x ) , C ( y ) \right) }
\end{equation}

Where $C(x)$ is the (integer) length of object $x$ when compressed.
NCD, and compression-based distances in general, do not require significant feature engineering to be applied in practice. This has made them popular for genomic phylogeny \cite{Li2004,Cilibrasi:2005:CC:2263422.2271649} and malware analysis \cite{Bayer2009,Apel2009,Bailey:2007:ACA:1776434.1776449,Karim2005}, where sequences are longer than what most other techniques can handle ($10^4-10^8$ steps in length), and it can be difficult to extract more sophisticated features manually. However, using compression algorithms naively in NCD leads to difficulties with computational scalability and reduced accuracy/failure cases due to the fact that compression algorithms were not designed for similarity analysis \cite{Borbely2015,cebrian2005common}. For these reasons, some have looked at converting known clustering algorithms into explicit distance measures. For example, the Lempel Ziv Jaccard Distance (LZJD) \cite{raff_lzjd_2017} converts the LZMA algorithm into a true distance measure, and for malware classification tasks has superior accuracy and runtime compared to NCD\footnote{A Java \cite{raff_lzjd_digest} and Python  \cite{pylzjd-proc-scipy-2019} implementations of LZJD are available.}. %

\subsection{Extended Burrows Wheeler Transform} \label{sec:ebwt}

\setlength{\columnsep}{3.0pt}%
\setlength{\intextsep}{10.0pt}%
\begin{wraptable}[14]{r}{3.750cm}
\vspace{-13pt}
\caption{BWT }
\centering
\label{tbl:bwt_example}
\texttt{
\begin{tabular}{@{}ccc@{}}
\toprule
F  & Rotation    & L  \\ \midrule
\$ & \$easypeasy & y  \\
a  & asy\$easype & e  \\
a  & asypeasy\$e & e  \\
e  & easy\$easyp & p  \\
e  & easypeasy\$ & \$ \\
p  & peasy\$easy & y  \\
s  & sy\$easypea & a  \\
s  & sypeasy\$ea & a  \\
y  & y\$easypeas & s  \\
y  & ypeasy\$eas & s  \\ \bottomrule
\end{tabular}
}
\end{wraptable} 

The Burrows Wheeler Transform (BWT) \cite{Burrows1994} is a core component of the bzip compression algorithm, and has been widely used in information retrieval applications due to its ability to accelerate search queries\cite{Ferragina:2005:ICT:1082036.1082039}. The BWT takes an input string $u$ of length $n=|u|$, over an alphabet $\Sigma$, and produces a new string $u' = bwt(u)$. Through the use of an end-of-file (EoF) marker, the BWT is invertible, so $u$ can be recovered from $u'$ without loss.

BWT's utility in compression is best understood through example. Consider \autoref{tbl:bwt_example}, where the BWT transform of the string "easypeasy" is shown. BWT adds a special EoF marker "\$", and lexicographically sorts every single-character rotation of the string (observe column F, which is in sorted order). The BWT output is then the last column of each string, shown in column L. By computing the BWT version of the string, we can see how runs of the same character ("ee", "aa", "ss") have been created that previously did not exist. Simple run-length encoding can then be applied to produce a compressed version of the string. 

These sorted rotations can be computed in $\bigO(n)$ time, and provide a simple method of compression. To turn this into a distance measure, \textcite{Mantaci:2005:EBW:2134629.2134645} developed the Extended Burrows Wheeler Transform (EBWT). The EBWT works by defining the BWT over a pair of inputs $u$ and $v$, and computing the sorted order of both sequences. An example of this is shown in \autoref{tbl:ebwt_example}. 

\begin{table}[!h]
\caption{EBWT computation example} \label{tbl:ebwt_example}
\adjustbox{max width=\columnwidth}{%
\texttt{
\begin{tabular}{@{}lllcc@{}}
\toprule
\multicolumn{1}{c}{bwt($u$=bcaa)} & \multicolumn{1}{c}{bwt($v$=ccbab)} & \multicolumn{1}{c}{EBWT Merge} & Source  \\ \midrule
a a b c                         & a b c c b                        & a a b c                        & $u$      \\
a b c a                         & b a b c c                        & a b c a                        & $u$      \\
b c a a                         & b c c b a                        & a b c c b                      & $v$      \\
c a a b                         & c b a b c                        & b a b c c                      & $v$      \\
\multicolumn{1}{c}{---}         & c c b a b                        & b c a a                        & $u$      \\
\multicolumn{1}{c}{---}         &\multicolumn{1}{c}{---}           & b c c b a                      & $v$      \\
\multicolumn{1}{c}{---}         &\multicolumn{1}{c}{---}           & c a a b                        & $u$      \\
\multicolumn{1}{c}{---}         &\multicolumn{1}{c}{---}           & c b a b c                      & $v$      \\
\multicolumn{1}{c}{---}         &\multicolumn{1}{c}{---}           & c c b a b                      & $v$      \\ \bottomrule
\end{tabular}
}
}
\end{table}

The distance between the two sequences $u$ and $v$
is defined by \eqref{eq:ebwt}, where $\text{rep}(i)$ returns how many times the $i$'th source occurred in a row, and that only $t$ source transitions occurred.
\begin{equation} \label{eq:ebwt}
ebwt(u,v) = \sum_{i=1}^t \max(\text{rep}(i)-1,0)
\end{equation}
 Again, this concept is easier understood through example. Considering \autoref{tbl:ebwt_example}, we can see that the source string sequence is $uuvvuvuvv$. If we group this by transitions, we get $u^2 v^2 u v u v^2$. Thus the distance $ebwt(u,v) = 1 + 1 + 0 + 0 +  0 + 1 = 3$. 
 
\textcite{Mantaci:2005:EBW:2134629.2134645} developed the EBWT for applications in bioinformatics, and developed theory to show a number of situations under which the EBWT will perform well or have desirable properties. However, it is still expensive to compute, requiring $\bigO(|u|+|v|)$ time for every distance computation. This makes EBWT less attractive as bioinformatic sequences become longer, and reduces its utility in other domains in which compression distances have found use, such as malware classification. While it was known that EBWT did not satisfy the triangle inequality, preventing it from being a true distance metric, previously unreported theoretical issues also exist. We will discuss these issues in \autoref{sec:theoretical_results}. 

\section{Burrows Wheeler Markov Distance} \label{sec:bwmd}

Inspired by the prior work we have just discussed, we now develop a new distance measure based on the Burrows Wheeler Transform. We will refer to our method as the Burrows Wheeler Markov Distance (BWMD), and it is simple to implement.  To begin, consider again \autoref{tbl:bwt_example}, where BWT("easypeasy") is shown. The BWT's effectiveness as a compression algorithm comes explicitly from its ability to re-order the content such that repetitions are reduced to a first-order occurrence. This is why run-length encoding  
is effective. Because first-order compression is independently effective after the BWT, we do not need to consider more complex interactions over the extent of a file.

We also do not care about the invertibility of any transform. Our goal is to define a new feature space where we can perform effective machine learning and information retrieval. So we seek to build a small statistical summary of the data, rather than 
build an object from which we can recreate the original data. %
By measuring the similarity of these small summaries, we measure the similarity of the underlying sequences. Given that first-order compression is effective with the BWT, we chose to select a first-order statistical model. In particular, we can use a Markov model of the probability of observing token $u'_i$ given the previous token $u'_{i-1}$. The transition matrix $T \in \mathbb{R}^{|\Sigma| \times |\Sigma|}$ can then be used as a statistical summary of the entire sequence BWT($u$). 

This is all that is needed to describe BWMD, and a succinct description is given below. Each step takes $\bigO(n)$ time for an input sequence of $n$ bytes. $\mathbbm{1}[z]$ is the indicator function, which returns $1$ if $z$ is true, and $0$ otherwise, and $\alpha$, $\beta$ are the rows and columns of the transition matrix. 
\begin{enumerate}
\item For each sequence $u$ in a corpus of size $N$ 
\item Compute $u'= \text{BWT}(u)$
\item Estimate flattened Markov transition vector $$x[\alpha+\beta\cdot |\Sigma|] = \frac{1}{|u'|-1}\sum_{i=2}^{|u'|}  \mathbbm{1}\left[u'_i = \alpha \land u'_{i-1} = \beta\right]$$
\item Normalize $x$ such that $x[i] = \sqrt{x[i]}/\sqrt{2}$. 
\end{enumerate}

After step (4) in the above process, we obtain from the input $u$ a single feature vector $x$ which we might use, in place of $u$, in machine learning or information retrieval algorithms.
Regardless of the length of the input sequence $u$, the size of the vector $x$ will depend only on the alphabet size $|\Sigma|$. When working on raw bytes, this would be a $256^2$ feature vector. While such a vector takes up to 256 KB per sequence $u$, the individual input data objects we consider in this work range from 1 MB in size up to 400 MB. This makes the BWMD description quite compact by comparison. When $u$ is shorter in length, the vector $x$ can be stored in a sparse form, making the memory cost $\bigO(\min(|u|, |\Sigma|^2))$

The normalization  in step (4) is also chosen intentionally. The Hellinger Distance %
is a metric over probability distributions. For the case of two discrete probability distributions $P = (p_1, \ldots, p_k)$ and $Q = (q_1, \ldots, q_k) $, the Hellinger Distance is defined as follows: %

\begin{equation} \label{eq:hellinger}
    H ( P , Q ) = \frac { 1 } { \sqrt { 2 } } \sqrt { \sum _ { i = 1 } ^ { k } \left( \sqrt { p _ { i } } - \sqrt { q _ { i } } \right) ^ { 2 } }
\end{equation}

Due to the form of \eqref{eq:hellinger}, the Hellinger distance corresponds to the Euclidean distance between two transformed and scaled versions of $P$ and $Q$. 
By using the square root of the coefficients in step (4), divided by $\sqrt{2}$, we have a feature vector $x$ that has been placed into a space where the Euclidean distance can be computed as usual. The results are then equivalent to the Hellinger distance between Markov transition vectors, giving us a statistically sound interpretation for BWMD. That the BWMD is a valid distance metric also follows immediately from the use of the Euclidean distance, which is well known to be a valid metric, a property that the EBWT lacks. 

We explicitly use the Hellinger distance over other alternatives, like KL-Divergence, because it corresponds exactly to the Euclidean distance after transformation. This makes is a valid distance metric 
for which we can make mathematical statements about its behavior with little work, and prove it does not have the same shortcomings as prior methods in this space. 
In addition, most other clustering and fast retrieval algorithms are built around the Euclidean distance, making our method compatible with a maximal number of complimentary techniques. This is not the case for any prior compression based measure. The ability to use algorithms like k-means, where other alternatives cannot, provides BWMD with advantages in terms of clustering accuracy, as well as computational efficiency to handle big data.

Note that because of the BWT, our approach is \textit{not} equivalent to 2-grams. Our comparison with LZJD, which can be interpreted as adaptive variable-length gram\cite{raff_shwel}, and BitShred, which uses 16-grams, will show that our method is meaningfully more effective than simple n-gram approaches.

\section{Theoretical Results} \label{sec:theoretical_results}

We begin by developing a stronger theoretical understanding of our new method, as well as the prior approach EBWT. Prior works have looked at a number of properties of the EBWT\cite{Mantaci2007,Mantaci2008,Mantaci:2005:EBW:2134629.2134645}, and describe situations in which EBWT will behave as a metric for a subset of possible inputs, and that it is invertible like the standard BWT. However, our interest in BWT is as a general purpose similarity measure for information retrieval and machine learning applications. We begin by showing three undesirable properties of the EBWT that reduce our confidence in its use for such applications. Then we will investigate the nature of our new BWMD in these same cases where EBWT might fail. 

\subsection{EBWT Shortcomings }

First we show a simple property that is a direct result of the EBWT measuring distance as a function of repeated source sequences. When we have two strings $u$ and $v$ in any alphabet $\Sigma$, it is necessarily the case that the distance is bounded below by the difference in sequence lengths $|u|$ and $|v|$. If $u$ and $v$ differ significantly, we are unlikely to be able to make meaningful similarity comparisons.  

\begin{theorem}\label{thm:ebwt_1}
The distance $ebwt(u, v) \geq \left||u|-|v|\right|-1 $. 
\end{theorem}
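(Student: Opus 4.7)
The plan is to reformulate $ebwt(u,v)$ in a way that makes the dependence on $|u|$ and $|v|$ explicit, and then bound the number of source-transitions using the fact that the source-label sequence must alternate between the two source strings.

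First I would rewrite the distance. The merged EBWT column has total length $|u|+|v|$, and $\sum_{i=1}^{t}\text{rep}(i) = |u|+|v|$, so
\[
ebwt(u,v) \;=\; \sum_{i=1}^{t}\bigl(\text{rep}(i)-1\bigr) \;=\; |u|+|v|-t,
\]
where $t$ is the number of maximal runs of the same source label. Thus lower bounding $ebwt(u,v)$ is equivalent to upper bounding $t$.

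Next I would observe that by the definition of a maximal run, consecutive runs come from different sources, so if $t_u$ and $t_v$ denote the number of runs sourced from $u$ and $v$ respectively, then $t = t_u+t_v$ and $|t_u - t_v|\le 1$. Assume without loss of generality that $|u|\le|v|$. Because each $u$-run contains at least one character drawn from $u$, we have $t_u\le |u|$, and by the alternation property $t_v\le t_u+1\le |u|+1$. Therefore $t\le 2|u|+1$, and substituting into the identity above yields
\[
ebwt(u,v) \;\ge\; |u|+|v|-(2|u|+1) \;=\; |v|-|u|-1 \;=\; \bigl||u|-|v|\bigr|-1,
\]
as claimed.

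I do not expect any real obstacle here; the only subtlety is justifying that consecutive runs truly alternate sources (which is immediate from maximality of the runs) and that each run contributes at least one character to its own source total (immediate from $\text{rep}(i)\ge 1$). The WLOG step handles the absolute value cleanly, so the whole argument should fit in a few lines.
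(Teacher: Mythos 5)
Your proposal is correct and follows essentially the same route as the paper: both arguments rest on the alternation of source labels in the merged column, which bounds the number of runs in terms of the shorter string's length. Your rendering via the identity $ebwt(u,v)=|u|+|v|-t$ (valid since every $\mathrm{rep}(i)\ge 1$) together with the precise bound $t\le 2\min(|u|,|v|)+1$ is in fact a cleaner and tighter accounting than the paper's informal "at most $2|u|$ transitions, necessitating $|v|-|u|$ repetitions at the end," and it lands exactly on the stated $|v|-|u|-1$.
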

\begin{proof}
Consider any two strings $u$ and $v$. The minimum distance involves the maximum number of transitions between string sources. If $|u| < |v|$, that means there can be at most $2\cdot |u|$ transitions, going back and forth between $u$ and $v$ on the merging. That necessitates $|v|-|u|$ repetitions at the end. Given the definition of $ebwt$ in \eqref{eq:ebwt}, that means the minimum distance between $u$ and $v$ must be $|v|-|u|-1$. 
\end{proof}

The above result leads us to suspect that EBWT will not be useful when the sequences being compared are of varying lengths. The greater the difference in sequence length, the more troubling this issue might become.  Given the insight from \autoref{thm:ebwt_1}, we move on to a more serious departure from our intuition of how a distance measure should behave. In particular, if $u$ is a subset of $v$, we should expect the distance between $u$ and $v$ to be small. Instead, it is possible for EBWT to return the maximal distance under this scenario. 

\begin{theorem} \label{thm:ebwt_bad_max}
It is possible for $ebwt(u,v) = |v|+|u|-2$, the maximum possible distance, even if $u \subset v$. 
\end{theorem}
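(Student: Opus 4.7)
The plan is to prove this by exhibiting an explicit family of counterexamples. Since the theorem is an existence claim, I just need one construction that simultaneously satisfies $u \subset v$ (as a substring) and forces the EBWT merge to fully segregate the rotations by source, since a fully segregated merge is exactly what produces the maximum distance $|u|+|v|-2$ under \eqref{eq:ebwt}.

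The construction I would propose is $u = a^k$ and $v = a^k b$ for any $k \geq 1$ over the alphabet $\{a,b\}$ with $a < b$. Trivially $u$ is a contiguous substring of $v$, so $u \subset v$. The first step is to enumerate rotations: every cyclic rotation of $u$ is literally the string $a^k$ (so there are $k$ identical rows for $u$), while the rotations of $v$ are the $k+1$ distinct strings $a^k b,\; a^{k-1} b a,\; \ldots,\; b a^k$, each containing exactly one $b$. The second step is to argue that under the comparison rule used by EBWT (the so-called $\omega$-order, which compares rotations as their infinite periodic extensions), every rotation of $u$ sorts before every rotation of $v$: extending $a^k$ periodically gives the infinite string $aaa\ldots$, which is strictly less, in the lexicographic order on infinite strings, than any extension that eventually contains a $b$. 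Hence the merged sorted column is $u^{\,k} v^{\,k+1}$, giving $\mathrm{rep}(1)=k$, $\mathrm{rep}(2)=k+1$, and $t=2$. Plugging into \eqref{eq:ebwt} yields $ebwt(u,v) = (k-1)+k = 2k-1 = |u|+|v|-2$, which is the maximum possible value.

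The main obstacle — really the only non-trivial point — is justifying the tie-breaking/sorting convention: several ``a''-only rotations of $u$ share a common prefix with several rotations of $v$, so the naive prefix-lexicographic rule is ambiguous and one has to invoke the omega-order definition of EBWT from \cite{Mantaci:2005:EBW:2134629.2134645} to conclude that $u$'s rotations indeed precede all of $v$'s. Once that is established, the distance computation reduces to reading off the two run lengths, and the bound is tight. I would also remark that the example scales: by taking $k$ arbitrarily large, the gap between our intuition (small distance for $u \subset v$) and the EBWT value (the maximum possible) can be made arbitrarily large, strengthening the motivation for preferring BWMD.
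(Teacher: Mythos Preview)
Your proof is correct and follows the same high-level strategy as the paper: exhibit an explicit pair $(u,v)$ with $u$ a substring of $v$ whose rotations fully segregate in the merged sort, forcing two runs of lengths $|u|$ and $|v|$ and hence the maximal score $|u|+|v|-2$. The paper's construction is even simpler, taking $u=a^{n_1}$ and $v=a^{n_2}$ with $n_2>n_1$ (both unary), and arguing that the shorter rotations all sort first. Your choice $v=a^k b$ is a minor variant, but it has one concrete advantage: under the $\omega$-order of \cite{Mantaci:2005:EBW:2134629.2134645}, the infinite periodic extensions of $a^{n_1}$ and $a^{n_2}$ are \emph{identical}, so the paper's claim that the $u$-rotations precede the $v$-rotations leans on a tie-breaking convention rather than a strict comparison. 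By inserting a single $b$, your construction makes the inequality strict in $\omega$-order and removes that dependence, at the cost of a slightly longer case analysis. Either way the argument is a two-line computation once segregation is established.
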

\begin{proof}
Consider the string $u=a^{n_1}$ and $v=a^{n_2}$, such that $n_2 > n_1$, $|u| = n_1$ and $|v|=n_2$. Because of the topographical sorting, all rotations of $u$ and $v$ will have the same characters, and so the sorting will only resolve once the max substring length is reached. Since all rotations in $u$ are of length $n_1$, which is shorter than $n_2$, a sorting will place all rotations of $u$ before any rotations of $v$. This results in a transition pattern of $u^{n_1} v^{n_2}$, and thus $ebwt(u,v)=n_1-1 + n_2-1 = |u|+|v|-2$. 
\end{proof}

\autoref{thm:ebwt_bad_max} defies our expectations in the case of similar inputs. If we use the behavior of the NID from \eqref{eq:nid}, we would expect the distance in this scenario to be small. Consider the proof's example with $u=a^{n_1}$ and $v=a^{n_2}$: we would expect NID($u$,$v$) $\leq  \log(n_2/n_1)$. This is because $v$ could be encoded as the sequence $u$ repeated $n_2/n_1$ times, which in the worst case, can be represented in a number of bits logarithmic in the value being encoded (i.e., the nature of any big-integer representation is that the maximum value that can be represented grows exponentially with a doubling of the bits). 

We now show that EBWT likewise surprises us in the case of dissimilar inputs,  where we can have $u$ and $v$ with no overlap in content, but EBWT identifies as having maximal similarity. 

\begin{theorem} \label{thm:ebwt_bad_0}
It is possible for $ebwt(u,v) = 0$, even if there exists no shared characters between $u$ and $v$. More formally, there exists $u,v$ and an alphabet $|\Sigma| \geq |u|+|v|$ such that for every $x \in u$ and $z \in v$, $x \notin v$ $\land$ $z \notin u$ yet $ebwt(u,v) = 0$. 
\end{theorem}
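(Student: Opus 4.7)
The plan is to prove this by explicit construction. First I would observe what $ebwt(u,v)=0$ really demands: by \eqref{eq:ebwt}, the sum vanishes iff every maximal run of consecutive same-source rows in the sorted merged rotation table has length exactly one. Equivalently, the source sequence must strictly alternate between $u$ and $v$ (necessarily with $|u|=|v|$, else the tail forces a run of length $\geq 2$). So my task reduces to exhibiting two strings, over disjoint alphabets, whose sorted single-character rotations interleave perfectly.

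Next, I would propose a minimal witness. Let $\Sigma=\{a,b,c,d\}$ with $a<b<c<d$, and set $u=ac$, $v=bd$. The rotations of $u$ are $\{ac,ca\}$ and of $v$ are $\{bd,db\}$. Sorting the union gives $ac<bd<ca<db$, with source pattern $u,v,u,v$; thus $\mathrm{rep}(i)=1$ for every $i$ and $ebwt(u,v)=0$. The alphabets of $u$ and $v$ are disjoint, and $|\Sigma|=4=|u|+|v|$, so all hypotheses of the statement are met.

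To make clear the phenomenon is not an artifact of small size, I would then generalize: for any $n\geq 1$, pick $2n$ distinct symbols $a_1<a_2<\cdots<a_{2n}$ and set $u=a_1 a_3 \cdots a_{2n-1}$ and $v=a_2 a_4 \cdots a_{2n}$. The same interleaving argument yields $ebwt(u,v)=0$ on arbitrarily long, character-disjoint inputs. This underscores that the failure is structural rather than a boundary case and complements Theorem~\ref{thm:ebwt_bad_max} by showing the opposite pathology (maximally dissimilar content, yet zero distance).

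The main obstacle is the verification that the lexicographic interleaving is indeed strict; everything else is bookkeeping. For the minimal example this reduces to the two inequalities $ac<bd$ and $ca<db$, both immediate from $a<b$ and $c<d$. In the generalization, I would need to verify that for each $k$, every rotation of $v$ whose leading character is $a_{2k}$ slots strictly between the rotations of $u$ with leading characters $a_{2k-1}$ and $a_{2k+1}$, which again follows from the chosen ordering since distinct leading characters make the first-character comparison decisive.
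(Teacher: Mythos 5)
Your construction is essentially the same as the paper's: both build $u$ and $v$ over disjoint, interleaved alphabets (the paper uses evens vs.\ odds, you use odd-indexed vs.\ even-indexed symbols) so that the sorted rotations strictly alternate in source, forcing every run length to be one and hence $ebwt(u,v)=0$. Your version is correct and in fact slightly more careful than the paper's, since you explicitly note that distinct leading characters make the first-character comparison decisive and you supply a minimal concrete witness.
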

\begin{proof}
Let $a \Vert b$ denote the concatenation of $a$ and $b$, and $\big\Vert_{i=1}^N i $ the sequential concatenation of $1, 2, \ldots, N$. Without loss of generality, define the string $u=\big\Vert_{i=1}^{n_0} 2\cdot i$ and $v=\big\Vert_{i=1}^{n_0} 2\cdot i+1$. Thus $|u|=|v|$, but in the lexicographical sorting $u$ and $v$ will alternate between rotations $u[0]=2$, $v[0]=3$, $u[1]=4$, $v[1]=5$, $\ldots$, $u[n_0]=2\cdot n_0$, $v[n_0]=2\cdot n_0+1$. Thus $ebwt(u,v)$ will always contain transitions with no source repetition, hence $ebwt(u,v) = 0$. 
\end{proof}

The reader may note that the construction of \autoref{thm:ebwt_bad_0} would allow one to argue that the scenario should return a small distance under the ideal Kolmogorov distance with NID. This could be argued because the construction of $u$ and $v$ allows $v$ to be represented as $v = u +1$. However, the same scenario can occur with randomized strings where the alphabet does not increment with any simple pattern  and is filled with random tokens, so long as there is no overlap in the tokens and the tokens "balance out" once sorted (i.e., there is no $f(\cdot) \text{ s.t. } v[i] = f(u[i])$ yet $v[i] > u[i] \land u[i+1] > v[i] \forall i$). This requires further expanding the alphabet size $|\Sigma|$, and as such makes \autoref{thm:ebwt_bad_0} the least practical of our concerns. However, we find it enlightening as a theoretical shortcoming which we would prefer to avoid.

While these issues give us cause for concern, EBWT has found use in practice. We will show in \autoref{sec:dna} that our concern for EBWT's utility is more justified when sequences have varying length. 

\subsection{Behaviors of BWMD}

To delve into BWMD's behavior, we will begin by analyzing the same scenarios used to show our theoretical concerns with the EBWT in the preceding section, as well as compare the behavior of BWMD to that of the LZJD algorithm. 

\begin{corollary} \label{thm:bwmd_repitition}
Given $u=a^{n_1}$ and $v=a^{n_2}$, such that $n_2 > n_1$, then $bwmd(u,v)=0$. 
\end{corollary}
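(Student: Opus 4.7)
The plan is to show that both $u$ and $v$ produce identical feature vectors under the BWMD construction, after which the Euclidean-distance definition of BWMD immediately gives distance zero.

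First I would analyze what $\text{BWT}(a^{n})$ looks like for any $n$. Every cyclic rotation of $a^{n}$ is identical to $a^{n}$ itself, so the lexicographic sort is trivial and the last column is simply $a^{n}$. Hence $u' = \text{BWT}(u) = a^{n_1}$ and $v' = \text{BWT}(v) = a^{n_2}$.

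Next I would compute the Markov transition vector (step 3 of the BWMD construction) for each. In $u'$, every one of the $|u'|-1 = n_1 - 1$ consecutive pairs is $(a,a)$, so the only nonzero entry of the unnormalized vector $x_u$ is at index $a + a\cdot|\Sigma|$, and its value is
\begin{equation*}
x_u[a+a\cdot|\Sigma|] \;=\; \frac{1}{n_1-1}\sum_{i=2}^{n_1} \mathbbm{1}[u'_i = a \land u'_{i-1}=a] \;=\; 1.
\end{equation*}
The identical calculation for $v'$ yields $x_v[a+a\cdot|\Sigma|] = 1$, with all other coordinates zero. After the Hellinger normalization of step (4), both vectors have the single nonzero entry $1/\sqrt{2}$ at the same coordinate, so $x_u = x_v$ coordinate-wise.

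Finally, since BWMD is defined as the Euclidean distance between these feature vectors (equivalently, the Hellinger distance between the Markov transition distributions), we get $bwmd(u,v) = \|x_u - x_v\|_2 = 0$. There is no genuine obstacle here: the only thing to notice is that the BWT of a constant-character string is that string itself, so the transition distribution depends only on the alphabet of $u$, not its length. This is exactly the behavior one would want in contrast to \autoref{thm:ebwt_bad_max}, where EBWT returned the maximum possible distance on the same pair.
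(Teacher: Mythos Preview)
Your proposal is correct and follows essentially the same approach as the paper: both argue that $u$ and $v$ yield identical feature vectors supported on the single $a\to a$ transition index with value $1/\sqrt{2}$, so the Euclidean distance is zero. Your write-up is in fact slightly more explicit than the paper's, since you spell out that $\text{BWT}(a^{n})=a^{n}$ before computing the transition vector, whereas the paper jumps directly to $\|x\|_0 = 1$.
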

\begin{proof}
Under the construction of the embedding of $u$,  $x_u \in \mathbb{R}^{|\Sigma|}$, $\|x\|_0 = 1$ since there will be only one transition pattern of $a-> a$. As such, the value at that index $i$ will be $x_u[i] = \sqrt{1}/\sqrt{2}= 1/\sqrt{2}$. Since the embedding $x_v$ will have the same construction, and thus, $x_u[i] - x_v[i] = 0$, and $\forall j \neq i, x_u[j] = x_v[j] = 0$. Therefore, the distance between $u$ and $v$ will be zero. 
\end{proof}

The above also does not conform to expectation with respect to the NID, because we are ignoring the length of the inputs in our computation of distances. The NID($u$,$v$) would be greater than zero in this scenario and is necessitated by storing the difference in repetition lengths $n_2$ and $n_1$.  This tells us BWMD will be less sensitive to differences in sequence length, which may be desirable or not, depending on the application. 

The behavior of LZJD in this scenario was used to prove its sensitivity  to potential repetition of the input. It was shown that $\text{LZJD}(a^{n_1},a^{n_2}) = 1-\frac{\sqrt{8 n_1+1}-1}{\sqrt{8 n_2+1}-1}$ as a lower bound for a similar scenario \cite{raff_lzjd_2017}. This distance grows at a rate considerably faster than logarithmic, but is also better than the EBWT distance in this case. We conclude that both LZJD and BWMD have better behavior, but BWMD will lower-bound the NID and LZJD will upper-bound the NID. 

In a similar manner as \autoref{thm:bwmd_repitition} was shown, the same construction can be applied to \autoref{thm:ebwt_bad_0}'s issue for BWMD. 

\begin{corollary}
For all $u, v$ such that for every $z \in v$, $x \notin v$ $\land$ $z \notin u$, then $bwmd(u,v)=1$, the maximum possible distance. 
\end{corollary}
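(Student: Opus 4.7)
The plan is to exploit two facts. First, the Burrows--Wheeler Transform only permutes the characters of its input, so $\mathrm{BWT}(u)$ uses exactly the alphabet of $u$ and $\mathrm{BWT}(v)$ uses exactly the alphabet of $v$. Given the disjointness hypothesis, this means that every nonzero coordinate of the (flattened) Markov transition vector for $u$ is indexed by a pair $(\alpha,\beta)$ with $\alpha,\beta$ in the alphabet of $u$, and analogously for $v$. Hence the supports of $x_u$ and $x_v$ in $\mathbb{R}^{|\Sigma|^2}$ are disjoint.

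Next I would use the step~(4) normalization to control the norms. Because the raw (unnormalized) vector is a probability distribution over transitions, $\sum_i x_u[i]^2 = \sum_i p_u[i]/2 = 1/2$, so $\|x_u\|_2 = 1/\sqrt{2}$, and likewise $\|x_v\|_2 = 1/\sqrt{2}$. Disjoint supports give $\langle x_u, x_v\rangle = 0$, so
\begin{equation*}
bwmd(u,v)^2 \;=\; \|x_u - x_v\|_2^2 \;=\; \|x_u\|_2^2 + \|x_v\|_2^2 - 2\langle x_u, x_v\rangle \;=\; \tfrac{1}{2} + \tfrac{1}{2} - 0 \;=\; 1,
\end{equation*}
and therefore $bwmd(u,v) = 1$.

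Finally I would justify the claim that $1$ is the maximum possible value. This is standard for the Hellinger distance: for any two probability vectors $P,Q$, $\sum_i (\sqrt{p_i}-\sqrt{q_i})^2 \le \sum_i p_i + \sum_i q_i = 2$, so $H(P,Q) \le 1$, with equality iff $P$ and $Q$ have disjoint support. Since BWMD was defined in Section~\ref{sec:bwmd} to coincide with the Hellinger distance between Markov transition distributions, the value $1$ is indeed the maximum attainable and is achieved precisely in the disjoint-alphabet case constructed here.

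I do not expect any serious obstacle; the argument is essentially a one-line consequence of the support-disjointness of the two transition vectors together with the Hellinger normalization. The only thing worth being careful about is the trivial edge case where $u$ or $v$ is too short to have any transitions (length $\le 1$), which does not arise under the stated hypothesis since both alphabets are nonempty and the BWT preserves length.
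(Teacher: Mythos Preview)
Your proposal is correct and follows essentially the same approach as the paper: both arguments reduce to the observation that disjoint alphabets force disjoint supports for the transition vectors, and then use $\sum_i \sqrt{p_i}^{\,2} = \sum_i p_i = 1$ to evaluate the Hellinger/Euclidean distance as $1$. Your write-up is simply more explicit than the paper's two-line sketch---you spell out the BWT-is-a-permutation step, the norm computation $\|x_u\|_2^2 = 1/2$, and the Hellinger upper bound, whereas the paper compresses all of this into the single line $\tfrac{1}{\sqrt{2}}\sqrt{2\cdot 1}=1$.
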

The derivation follows from the fact that $\sum_{i=1}^k \sqrt{p_i}^2 = \sum_{i=1}^k p_i = 1$. This means the distance computation will reduce to $1/\sqrt{2}\sqrt{2\cdot1} = 1$. Therefore, the distance when the embeddings $x_u$ and $x_v$ have no intersection is maximized. In this case, BWMD aligns well with the behavior we would expect from the NID. Likewise it is easy to see that LZJD will return its maximum distance of 1 in this scenario as well. LZJD measures the set intersection, so when the sets have no intersection, then maximal distance is achieved.

\section{Genomic Clustering} \label{sec:dna}

We begin by showing that our new BWMD has similar utility as the original EBWT distance for genomic phylogeny from DNA sequences. This was the original proposed use of the EBWT measure, where they evaluated the Single-Link Clustering results on mitochondrial DNA (mtDNA) \cite{Mantaci:2005:EBW:2134629.2134645}. Such data can be obtained using the NIH GeneBank
, which we have used to create a similar corpus of DNA sequences to compare the relative pros and cons of BWMD and EBWT. We will use both mtDNA as has been done in prior work, but also a more challenging case with chromosomal genomic scaffold DNA sequences. We will produce dendrograms for each tasks with Single-Link Clustering (SLINK).

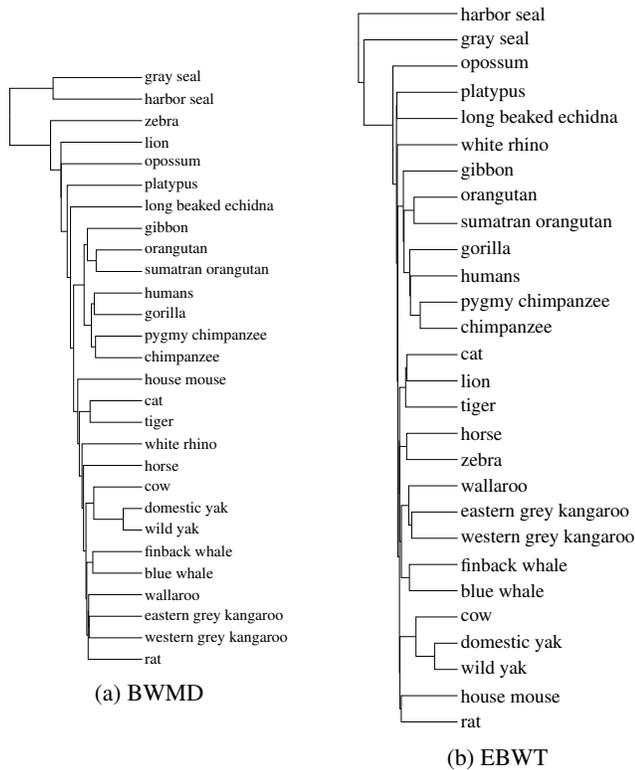
\begin{figure}[!htb]
\centering
\begin{subfigure}{0.45\columnwidth}
\centering
\adjustbox{max width=\columnwidth}{%
    \begin{turn}{90}
        \begin{tikzpicture}[sloped]
\tikzset{anchor=west}
\node[rotate=-90] (whiterhino) at (10.0,0.0) {\Huge white rhino};
\node[rotate=-90] (wallaroo) at (3.0,0.0) {\Huge wallaroo};
\node[rotate=-90] (cow) at (8.0,0.0) {\Huge cow};
\node[rotate=-90] (bluewhale) at (4.0,0.0) {\Huge blue whale};
\node[rotate=-90] (housemouse) at (13.0,0.0) {\Huge house mouse};
\node[rotate=-90] (finbackwhale) at (5.0,0.0) {\Huge finback whale};
\node[rotate=-90] (cat) at (12.0,0.0) {\Huge cat};
\node[rotate=-90] (domesticyak) at (7.0,0.0) {\Huge domestic yak};
\node[rotate=-90] (sumatranorangutan) at (18.0,0.0) {\Huge sumatran orangutan};
\node[rotate=-90] (chimpanzee) at (14.0,0.0) {\Huge chimpanzee};
\node[rotate=-90] (zebra) at (25.0,0.0) {\Huge zebra};
\node[rotate=-90] (gorilla) at (16.0,0.0) {\Huge gorilla};
\node[rotate=-90] (platypus) at (22.0,0.0) {\Huge platypus};
\node[rotate=-90] (opossum) at (23.0,0.0) {\Huge opossum};
\node[rotate=-90] (rat) at (0.0,0.0) {\Huge rat};
\node[rotate=-90] (grayseal) at (27.0,0.0) {\Huge gray seal};
\node[rotate=-90] (orangutan) at (19.0,0.0) {\Huge orangutan};
\node[rotate=-90] (harborseal) at (26.0,0.0) {\Huge harbor seal};
\node[rotate=-90] (westerngreykangaroo) at (1.0,0.0) {\Huge western grey kangaroo};
\node[rotate=-90] (tiger) at (11.0,0.0) {\Huge tiger};
\node[rotate=-90] (lion) at (24.0,0.0) {\Huge lion};
\node[rotate=-90] (horse) at (9.0,0.0) {\Huge horse};
\node[rotate=-90] (easterngreykangaroo) at (2.0,0.0) {\Huge eastern grey kangaroo};
\node[rotate=-90] (longbeakedechidna) at (21.0,0.0) {\Huge long beaked echidna};
\node[rotate=-90] (gibbon) at (20.0,0.0) {\Huge gibbon};
\node[rotate=-90] (pygmychimpanzee) at (15.0,0.0) {\Huge pygmy chimpanzee};
\node[rotate=-90] (wildyak) at (6.0,0.0) {\Huge wild yak};
\node[rotate=-90] (humans) at (17.0,0.0) {\Huge humans};
\node[rotate=-90] (wildyakdomesticyak) at (6.5,1.0) {\Huge };
\draw  (wildyak) |- (wildyakdomesticyak.center);
\draw  (domesticyak) |- (wildyakdomesticyak.center);
\node[rotate=-90] (sumatranorangutanorangutan) at (18.5,2.2537307751359874) {\Huge };
\draw  (sumatranorangutan) |- (sumatranorangutanorangutan.center);
\draw  (orangutan) |- (sumatranorangutanorangutan.center);
\node[rotate=-90] (chimpanzeepygmychimpanzee) at (14.5,2.2845593384579734) {\Huge };
\draw  (chimpanzee) |- (chimpanzeepygmychimpanzee.center);
\draw  (pygmychimpanzee) |- (chimpanzeepygmychimpanzee.center);
\node[rotate=-90] (gorillahumans) at (16.5,2.343273638772904) {\Huge };
\draw  (gorilla) |- (gorillahumans.center);
\draw  (humans) |- (gorillahumans.center);
\node[rotate=-90] (wildyakdomesticyakcow) at (7.25,2.365584712159716) {\Huge };
\draw  (wildyakdomesticyak.center) |- (wildyakdomesticyakcow.center);
\draw  (cow) |- (wildyakdomesticyakcow.center);
\node[rotate=-90] (bluewhalefinbackwhale) at (4.5,2.4104475630389035) {\Huge };
\draw  (bluewhale) |- (bluewhalefinbackwhale.center);
\draw  (finbackwhale) |- (bluewhalefinbackwhale.center);
\node[rotate=-90] (chimpanzeepygmychimpanzeegorillahumans) at (15.5,2.4770548395173497) {\Huge };
\draw  (chimpanzeepygmychimpanzee.center) |- (chimpanzeepygmychimpanzeegorillahumans.center);
\draw  (gorillahumans.center) |- (chimpanzeepygmychimpanzeegorillahumans.center);
\node[rotate=-90] (tigercat) at (11.5,2.5364238523210902) {\Huge };
\draw  (tiger) |- (tigercat.center);
\draw  (cat) |- (tigercat.center);
\node[rotate=-90] (westerngreykangarooeasterngreykangaroo) at (1.5,2.57930349529655) {\Huge };
\draw  (westerngreykangaroo) |- (westerngreykangarooeasterngreykangaroo.center);
\draw  (easterngreykangaroo) |- (westerngreykangarooeasterngreykangaroo.center);
\node[rotate=-90] (westerngreykangarooeasterngreykangaroowallaroo) at (2.25,2.603263401455906) {\Huge };
\draw  (westerngreykangarooeasterngreykangaroo.center) |- (westerngreykangarooeasterngreykangaroowallaroo.center);
\draw  (wallaroo) |- (westerngreykangarooeasterngreykangaroowallaroo.center);
\node[rotate=-90] (ratwesterngreykangarooeasterngreykangaroowallaroo) at (1.125,2.62641177515117) {\Huge };
\draw  (rat) |- (ratwesterngreykangarooeasterngreykangaroowallaroo.center);
\draw  (westerngreykangarooeasterngreykangaroowallaroo.center) |- (ratwesterngreykangarooeasterngreykangaroowallaroo.center);
\node[rotate=-90] (sumatranorangutanorangutangibbon) at (19.25,2.660915033573577) {\Huge };
\draw  (sumatranorangutanorangutan.center) |- (sumatranorangutanorangutangibbon.center);
\draw  (gibbon) |- (sumatranorangutanorangutangibbon.center);
\node[rotate=-90] (ratwesterngreykangarooeasterngreykangaroowallaroobluewhalefinbackwhale) at (2.8125,2.690735675224926) {\Huge };
\draw  (ratwesterngreykangarooeasterngreykangaroowallaroo.center) |- (ratwesterngreykangarooeasterngreykangaroowallaroobluewhalefinbackwhale.center);
\draw  (bluewhalefinbackwhale.center) |- (ratwesterngreykangarooeasterngreykangaroowallaroobluewhalefinbackwhale.center);
\node[rotate=-90] (ratwesterngreykangarooeasterngreykangaroowallaroobluewhalefinbackwhalewildyakdomesticyakcow) at (5.03125,2.728281003579813) {\Huge };
\draw  (ratwesterngreykangarooeasterngreykangaroowallaroobluewhalefinbackwhale.center) |- (ratwesterngreykangarooeasterngreykangaroowallaroobluewhalefinbackwhalewildyakdomesticyakcow.center);
\draw  (wildyakdomesticyakcow.center) |- (ratwesterngreykangarooeasterngreykangaroowallaroobluewhalefinbackwhalewildyakdomesticyakcow.center);
\node[rotate=-90] (chimpanzeepygmychimpanzeegorillahumanssumatranorangutanorangutangibbon) at (17.375,2.814449218073787) {\Huge };
\draw  (chimpanzeepygmychimpanzeegorillahumans.center) |- (chimpanzeepygmychimpanzeegorillahumanssumatranorangutanorangutangibbon.center);
\draw  (sumatranorangutanorangutangibbon.center) |- (chimpanzeepygmychimpanzeegorillahumanssumatranorangutanorangutangibbon.center);
\node[rotate=-90] (ratwesterngreykangarooeasterngreykangaroowallaroobluewhalefinbackwhalewildyakdomesticyakcowhorse) at (7.015625,2.8412551225740987) {\Huge };
\draw  (ratwesterngreykangarooeasterngreykangaroowallaroobluewhalefinbackwhalewildyakdomesticyakcow.center) |- (ratwesterngreykangarooeasterngreykangaroowallaroobluewhalefinbackwhalewildyakdomesticyakcowhorse.center);
\draw  (horse) |- (ratwesterngreykangarooeasterngreykangaroowallaroobluewhalefinbackwhalewildyakdomesticyakcowhorse.center);
\node[rotate=-90] (ratwesterngreykangarooeasterngreykangaroowallaroobluewhalefinbackwhalewildyakdomesticyakcowhorsewhiterhino) at (8.5078125,2.910637313603119) {\Huge };
\draw  (ratwesterngreykangarooeasterngreykangaroowallaroobluewhalefinbackwhalewildyakdomesticyakcowhorse.center) |- (ratwesterngreykangarooeasterngreykangaroowallaroobluewhalefinbackwhalewildyakdomesticyakcowhorsewhiterhino.center);
\draw  (whiterhino) |- (ratwesterngreykangarooeasterngreykangaroowallaroobluewhalefinbackwhalewildyakdomesticyakcowhorsewhiterhino.center);
\node[rotate=-90] (ratwesterngreykangarooeasterngreykangaroowallaroobluewhalefinbackwhalewildyakdomesticyakcowhorsewhiterhinotigercat) at (10.00390625,3.036514986290469) {\Huge };
\draw  (ratwesterngreykangarooeasterngreykangaroowallaroobluewhalefinbackwhalewildyakdomesticyakcowhorsewhiterhino.center) |- (ratwesterngreykangarooeasterngreykangaroowallaroobluewhalefinbackwhalewildyakdomesticyakcowhorsewhiterhinotigercat.center);
\draw  (tigercat.center) |- (ratwesterngreykangarooeasterngreykangaroowallaroobluewhalefinbackwhalewildyakdomesticyakcowhorsewhiterhinotigercat.center);
\node[rotate=-90] (ratwesterngreykangarooeasterngreykangaroowallaroobluewhalefinbackwhalewildyakdomesticyakcowhorsewhiterhinotigercathousemouse) at (11.501953125,3.1102788367747274) {\Huge };
\draw  (ratwesterngreykangarooeasterngreykangaroowallaroobluewhalefinbackwhalewildyakdomesticyakcowhorsewhiterhinotigercat.center) |- (ratwesterngreykangarooeasterngreykangaroowallaroobluewhalefinbackwhalewildyakdomesticyakcowhorsewhiterhinotigercathousemouse.center);
\draw  (housemouse) |- (ratwesterngreykangarooeasterngreykangaroowallaroobluewhalefinbackwhalewildyakdomesticyakcowhorsewhiterhinotigercathousemouse.center);
\node[rotate=-90] (ratwesterngreykangarooeasterngreykangaroowallaroobluewhalefinbackwhalewildyakdomesticyakcowhorsewhiterhinotigercathousemousechimpanzeepygmychimpanzeegorillahumanssumatranorangutanorangutangibbon) at (14.4384765625,3.3005010895852434) {\Huge };
\draw  (ratwesterngreykangarooeasterngreykangaroowallaroobluewhalefinbackwhalewildyakdomesticyakcowhorsewhiterhinotigercathousemouse.center) |- (ratwesterngreykangarooeasterngreykangaroowallaroobluewhalefinbackwhalewildyakdomesticyakcowhorsewhiterhinotigercathousemousechimpanzeepygmychimpanzeegorillahumanssumatranorangutanorangutangibbon.center);
\draw  (chimpanzeepygmychimpanzeegorillahumanssumatranorangutanorangutangibbon.center) |- (ratwesterngreykangarooeasterngreykangaroowallaroobluewhalefinbackwhalewildyakdomesticyakcowhorsewhiterhinotigercathousemousechimpanzeepygmychimpanzeegorillahumanssumatranorangutanorangutangibbon.center);
\node[rotate=-90] (ratwesterngreykangarooeasterngreykangaroowallaroobluewhalefinbackwhalewildyakdomesticyakcowhorsewhiterhinotigercathousemousechimpanzeepygmychimpanzeegorillahumanssumatranorangutanorangutangibbonlongbeakedechidna) at (17.71923828125,3.4453297674172885) {\Huge };
\draw  (ratwesterngreykangarooeasterngreykangaroowallaroobluewhalefinbackwhalewildyakdomesticyakcowhorsewhiterhinotigercathousemousechimpanzeepygmychimpanzeegorillahumanssumatranorangutanorangutangibbon.center) |- (ratwesterngreykangarooeasterngreykangaroowallaroobluewhalefinbackwhalewildyakdomesticyakcowhorsewhiterhinotigercathousemousechimpanzeepygmychimpanzeegorillahumanssumatranorangutanorangutangibbonlongbeakedechidna.center);
\draw  (longbeakedechidna) |- (ratwesterngreykangarooeasterngreykangaroowallaroobluewhalefinbackwhalewildyakdomesticyakcowhorsewhiterhinotigercathousemousechimpanzeepygmychimpanzeegorillahumanssumatranorangutanorangutangibbonlongbeakedechidna.center);
\node[rotate=-90] (ratwesterngreykangarooeasterngreykangaroowallaroobluewhalefinbackwhalewildyakdomesticyakcowhorsewhiterhinotigercathousemousechimpanzeepygmychimpanzeegorillahumanssumatranorangutanorangutangibbonlongbeakedechidnaplatypus) at (19.859619140625,3.597034302332711) {\Huge };
\draw  (ratwesterngreykangarooeasterngreykangaroowallaroobluewhalefinbackwhalewildyakdomesticyakcowhorsewhiterhinotigercathousemousechimpanzeepygmychimpanzeegorillahumanssumatranorangutanorangutangibbonlongbeakedechidna.center) |- (ratwesterngreykangarooeasterngreykangaroowallaroobluewhalefinbackwhalewildyakdomesticyakcowhorsewhiterhinotigercathousemousechimpanzeepygmychimpanzeegorillahumanssumatranorangutanorangutangibbonlongbeakedechidnaplatypus.center);
\draw  (platypus) |- (ratwesterngreykangarooeasterngreykangaroowallaroobluewhalefinbackwhalewildyakdomesticyakcowhorsewhiterhinotigercathousemousechimpanzeepygmychimpanzeegorillahumanssumatranorangutanorangutangibbonlongbeakedechidnaplatypus.center);
\node[rotate=-90] (ratwesterngreykangarooeasterngreykangaroowallaroobluewhalefinbackwhalewildyakdomesticyakcowhorsewhiterhinotigercathousemousechimpanzeepygmychimpanzeegorillahumanssumatranorangutanorangutangibbonlongbeakedechidnaplatypusopossum) at (21.4298095703125,3.876861306149897) {\Huge };
\draw  (ratwesterngreykangarooeasterngreykangaroowallaroobluewhalefinbackwhalewildyakdomesticyakcowhorsewhiterhinotigercathousemousechimpanzeepygmychimpanzeegorillahumanssumatranorangutanorangutangibbonlongbeakedechidnaplatypus.center) |- (ratwesterngreykangarooeasterngreykangaroowallaroobluewhalefinbackwhalewildyakdomesticyakcowhorsewhiterhinotigercathousemousechimpanzeepygmychimpanzeegorillahumanssumatranorangutanorangutangibbonlongbeakedechidnaplatypusopossum.center);
\draw  (opossum) |- (ratwesterngreykangarooeasterngreykangaroowallaroobluewhalefinbackwhalewildyakdomesticyakcowhorsewhiterhinotigercathousemousechimpanzeepygmychimpanzeegorillahumanssumatranorangutanorangutangibbonlongbeakedechidnaplatypusopossum.center);
\node[rotate=-90] (ratwesterngreykangarooeasterngreykangaroowallaroobluewhalefinbackwhalewildyakdomesticyakcowhorsewhiterhinotigercathousemousechimpanzeepygmychimpanzeegorillahumanssumatranorangutanorangutangibbonlongbeakedechidnaplatypusopossumlion) at (22.71490478515625,3.8873787400453894) {\Huge };
\draw  (ratwesterngreykangarooeasterngreykangaroowallaroobluewhalefinbackwhalewildyakdomesticyakcowhorsewhiterhinotigercathousemousechimpanzeepygmychimpanzeegorillahumanssumatranorangutanorangutangibbonlongbeakedechidnaplatypusopossum.center) |- (ratwesterngreykangarooeasterngreykangaroowallaroobluewhalefinbackwhalewildyakdomesticyakcowhorsewhiterhinotigercathousemousechimpanzeepygmychimpanzeegorillahumanssumatranorangutanorangutangibbonlongbeakedechidnaplatypusopossumlion.center);
\draw  (lion) |- (ratwesterngreykangarooeasterngreykangaroowallaroobluewhalefinbackwhalewildyakdomesticyakcowhorsewhiterhinotigercathousemousechimpanzeepygmychimpanzeegorillahumanssumatranorangutanorangutangibbonlongbeakedechidnaplatypusopossumlion.center);
\node[rotate=-90] (harborsealgrayseal) at (26.5,4.251693715126765) {\Huge };
\draw  (harborseal) |- (harborsealgrayseal.center);
\draw  (grayseal) |- (harborsealgrayseal.center);
\node[rotate=-90] (ratwesterngreykangarooeasterngreykangaroowallaroobluewhalefinbackwhalewildyakdomesticyakcowhorsewhiterhinotigercathousemousechimpanzeepygmychimpanzeegorillahumanssumatranorangutanorangutangibbonlongbeakedechidnaplatypusopossumlionzebra) at (23.857452392578125,4.3702614199287195) {\Huge };
\draw  (ratwesterngreykangarooeasterngreykangaroowallaroobluewhalefinbackwhalewildyakdomesticyakcowhorsewhiterhinotigercathousemousechimpanzeepygmychimpanzeegorillahumanssumatranorangutanorangutangibbonlongbeakedechidnaplatypusopossumlion.center) |- (ratwesterngreykangarooeasterngreykangaroowallaroobluewhalefinbackwhalewildyakdomesticyakcowhorsewhiterhinotigercathousemousechimpanzeepygmychimpanzeegorillahumanssumatranorangutanorangutangibbonlongbeakedechidnaplatypusopossumlionzebra.center);
\draw  (zebra) |- (ratwesterngreykangarooeasterngreykangaroowallaroobluewhalefinbackwhalewildyakdomesticyakcowhorsewhiterhinotigercathousemousechimpanzeepygmychimpanzeegorillahumanssumatranorangutanorangutangibbonlongbeakedechidnaplatypusopossumlionzebra.center);
\node[rotate=-90] (ratwesterngreykangarooeasterngreykangaroowallaroobluewhalefinbackwhalewildyakdomesticyakcowhorsewhiterhinotigercathousemousechimpanzeepygmychimpanzeegorillahumanssumatranorangutanorangutangibbonlongbeakedechidnaplatypusopossumlionzebraharborsealgrayseal) at (25.178726196289062,6.259904315031815) {\Huge };
\draw  (ratwesterngreykangarooeasterngreykangaroowallaroobluewhalefinbackwhalewildyakdomesticyakcowhorsewhiterhinotigercathousemousechimpanzeepygmychimpanzeegorillahumanssumatranorangutanorangutangibbonlongbeakedechidnaplatypusopossumlionzebra.center) |- (ratwesterngreykangarooeasterngreykangaroowallaroobluewhalefinbackwhalewildyakdomesticyakcowhorsewhiterhinotigercathousemousechimpanzeepygmychimpanzeegorillahumanssumatranorangutanorangutangibbonlongbeakedechidnaplatypusopossumlionzebraharborsealgrayseal.center);
\draw  (harborsealgrayseal.center) |- (ratwesterngreykangarooeasterngreykangaroowallaroobluewhalefinbackwhalewildyakdomesticyakcowhorsewhiterhinotigercathousemousechimpanzeepygmychimpanzeegorillahumanssumatranorangutanorangutangibbonlongbeakedechidnaplatypusopossumlionzebraharborsealgrayseal.center);
\end{tikzpicture}
    \end{turn}
}
\caption{BWMD} \label{fig:mtdna_bwmd}
\end{subfigure}
\hspace*{\fill} %
\begin{subfigure}{0.45\columnwidth}
\centering
\adjustbox{max width=\columnwidth}{%
    \begin{turn}{90}
        \begin{tikzpicture}[sloped]
\tikzset{anchor=west}
\node[rotate=-90] (whiterhino) at (22.0,0.0) {\Huge white rhino};
\node[rotate=-90] (wallaroo) at (9.0,0.0) {\Huge wallaroo};
\node[rotate=-90] (cow) at (4.0,0.0) {\Huge cow};
\node[rotate=-90] (bluewhale) at (5.0,0.0) {\Huge blue whale};
\node[rotate=-90] (housemouse) at (1.0,0.0) {\Huge house mouse};
\node[rotate=-90] (finbackwhale) at (6.0,0.0) {\Huge finback whale};
\node[rotate=-90] (cat) at (14.0,0.0) {\Huge cat};
\node[rotate=-90] (domesticyak) at (3.0,0.0) {\Huge domestic yak};
\node[rotate=-90] (sumatranorangutan) at (19.0,0.0) {\Huge sumatran orangutan};
\node[rotate=-90] (chimpanzee) at (15.0,0.0) {\Huge chimpanzee};
\node[rotate=-90] (zebra) at (10.0,0.0) {\Huge zebra};
\node[rotate=-90] (gorilla) at (18.0,0.0) {\Huge gorilla};
\node[rotate=-90] (platypus) at (24.0,0.0) {\Huge platypus};
\node[rotate=-90] (opossum) at (25.0,0.0) {\Huge opossum};
\node[rotate=-90] (rat) at (0.0,0.0) {\Huge rat};
\node[rotate=-90] (grayseal) at (26.0,0.0) {\Huge gray seal};
\node[rotate=-90] (orangutan) at (20.0,0.0) {\Huge orangutan};
\node[rotate=-90] (harborseal) at (27.0,0.0) {\Huge harbor seal};
\node[rotate=-90] (westerngreykangaroo) at (7.0,0.0) {\Huge western grey kangaroo};
\node[rotate=-90] (tiger) at (12.0,0.0) {\Huge tiger};
\node[rotate=-90] (lion) at (13.0,0.0) {\Huge lion};
\node[rotate=-90] (horse) at (11.0,0.0) {\Huge horse};
\node[rotate=-90] (easterngreykangaroo) at (8.0,0.0) {\Huge eastern grey kangaroo};
\node[rotate=-90] (longbeakedechidna) at (23.0,0.0) {\Huge long beaked echidna};
\node[rotate=-90] (gibbon) at (21.0,0.0) {\Huge gibbon};
\node[rotate=-90] (pygmychimpanzee) at (16.0,0.0) {\Huge pygmy chimpanzee};
\node[rotate=-90] (wildyak) at (2.0,0.0) {\Huge wild yak};
\node[rotate=-90] (humans) at (17.0,0.0) {\Huge humans};
\node[rotate=-90] (wildyakdomesticyak) at (2.5,1.0) {\Huge };
\draw  (wildyak) |- (wildyakdomesticyak.center);
\draw  (domesticyak) |- (wildyakdomesticyak.center);
\node[rotate=-90] (chimpanzeepygmychimpanzee) at (15.5,1.5474353693785516) {\Huge };
\draw  (chimpanzee) |- (chimpanzeepygmychimpanzee.center);
\draw  (pygmychimpanzee) |- (chimpanzeepygmychimpanzee.center);
\node[rotate=-90] (wildyakdomesticyakcow) at (3.25,1.7099542988763266) {\Huge };
\draw  (wildyakdomesticyak.center) |- (wildyakdomesticyakcow.center);
\draw  (cow) |- (wildyakdomesticyakcow.center);
\node[rotate=-90] (sumatranorangutanorangutan) at (19.5,1.7822749604559527) {\Huge };
\draw  (sumatranorangutan) |- (sumatranorangutanorangutan.center);
\draw  (orangutan) |- (sumatranorangutanorangutan.center);
\node[rotate=-90] (westerngreykangarooeasterngreykangaroo) at (7.5,1.8712224464724487) {\Huge };
\draw  (westerngreykangaroo) |- (westerngreykangarooeasterngreykangaroo.center);
\draw  (easterngreykangaroo) |- (westerngreykangarooeasterngreykangaroo.center);
\node[rotate=-90] (chimpanzeepygmychimpanzeehumans) at (16.25,1.9196748298583954) {\Huge };
\draw  (chimpanzeepygmychimpanzee.center) |- (chimpanzeepygmychimpanzeehumans.center);
\draw  (humans) |- (chimpanzeepygmychimpanzeehumans.center);
\node[rotate=-90] (chimpanzeepygmychimpanzeehumansgorilla) at (17.125,1.9463430769405567) {\Huge };
\draw  (chimpanzeepygmychimpanzeehumans.center) |- (chimpanzeepygmychimpanzeehumansgorilla.center);
\draw  (gorilla) |- (chimpanzeepygmychimpanzeehumansgorilla.center);
\node[rotate=-90] (bluewhalefinbackwhale) at (5.5,1.9594151585079098) {\Huge };
\draw  (bluewhale) |- (bluewhalefinbackwhale.center);
\draw  (finbackwhale) |- (bluewhalefinbackwhale.center);
\node[rotate=-90] (westerngreykangarooeasterngreykangaroowallaroo) at (8.25,1.9850575891212472) {\Huge };
\draw  (westerngreykangarooeasterngreykangaroo.center) |- (westerngreykangarooeasterngreykangaroowallaroo.center);
\draw  (wallaroo) |- (westerngreykangarooeasterngreykangaroowallaroo.center);
\node[rotate=-90] (lioncat) at (13.5,2.0523612710173538) {\Huge };
\draw  (lion) |- (lioncat.center);
\draw  (cat) |- (lioncat.center);
\node[rotate=-90] (zebrahorse) at (10.5,2.0641261125969406) {\Huge };
\draw  (zebra) |- (zebrahorse.center);
\draw  (horse) |- (zebrahorse.center);
\node[rotate=-90] (tigerlioncat) at (12.75,2.1042461063863658) {\Huge };
\draw  (tiger) |- (tigerlioncat.center);
\draw  (lioncat.center) |- (tigerlioncat.center);
\node[rotate=-90] (chimpanzeepygmychimpanzeehumansgorillasumatranorangutanorangutan) at (18.3125,2.164209571153923) {\Huge };
\draw  (chimpanzeepygmychimpanzeehumansgorilla.center) |- (chimpanzeepygmychimpanzeehumansgorillasumatranorangutanorangutan.center);
\draw  (sumatranorangutanorangutan.center) |- (chimpanzeepygmychimpanzeehumansgorillasumatranorangutanorangutan.center);
\node[rotate=-90] (chimpanzeepygmychimpanzeehumansgorillasumatranorangutanorangutangibbon) at (19.65625,2.1903207151576085) {\Huge };
\draw  (chimpanzeepygmychimpanzeehumansgorillasumatranorangutanorangutan.center) |- (chimpanzeepygmychimpanzeehumansgorillasumatranorangutanorangutangibbon.center);
\draw  (gibbon) |- (chimpanzeepygmychimpanzeehumansgorillasumatranorangutanorangutangibbon.center);
\node[rotate=-90] (bluewhalefinbackwhalewesterngreykangarooeasterngreykangaroowallaroo) at (6.875,2.2647968080590912) {\Huge };
\draw  (bluewhalefinbackwhale.center) |- (bluewhalefinbackwhalewesterngreykangarooeasterngreykangaroowallaroo.center);
\draw  (westerngreykangarooeasterngreykangaroowallaroo.center) |- (bluewhalefinbackwhalewesterngreykangarooeasterngreykangaroowallaroo.center);
\node[rotate=-90] (rathousemouse) at (0.5,2.269570086811749) {\Huge };
\draw  (rat) |- (rathousemouse.center);
\draw  (housemouse) |- (rathousemouse.center);
\node[rotate=-90] (rathousemousewildyakdomesticyakcow) at (1.875,2.320625257612033) {\Huge };
\draw  (rathousemouse.center) |- (rathousemousewildyakdomesticyakcow.center);
\draw  (wildyakdomesticyakcow.center) |- (rathousemousewildyakdomesticyakcow.center);
\node[rotate=-90] (bluewhalefinbackwhalewesterngreykangarooeasterngreykangaroowallaroozebrahorse) at (8.6875,2.3341086079493207) {\Huge };
\draw  (bluewhalefinbackwhalewesterngreykangarooeasterngreykangaroowallaroo.center) |- (bluewhalefinbackwhalewesterngreykangarooeasterngreykangaroowallaroozebrahorse.center);
\draw  (zebrahorse.center) |- (bluewhalefinbackwhalewesterngreykangarooeasterngreykangaroowallaroozebrahorse.center);
\node[rotate=-90] (bluewhalefinbackwhalewesterngreykangarooeasterngreykangaroowallaroozebrahorsetigerlioncat) at (10.71875,2.3341086079493207) {\Huge };
\draw  (bluewhalefinbackwhalewesterngreykangarooeasterngreykangaroowallaroozebrahorse.center) |- (bluewhalefinbackwhalewesterngreykangarooeasterngreykangaroowallaroozebrahorsetigerlioncat.center);
\draw  (tigerlioncat.center) |- (bluewhalefinbackwhalewesterngreykangarooeasterngreykangaroowallaroozebrahorsetigerlioncat.center);
\node[rotate=-90] (rathousemousewildyakdomesticyakcowbluewhalefinbackwhalewesterngreykangarooeasterngreykangaroowallaroozebrahorsetigerlioncat) at (6.296875,2.3385629582987004) {\Huge };
\draw  (rathousemousewildyakdomesticyakcow.center) |- (rathousemousewildyakdomesticyakcowbluewhalefinbackwhalewesterngreykangarooeasterngreykangaroowallaroozebrahorsetigerlioncat.center);
\draw  (bluewhalefinbackwhalewesterngreykangarooeasterngreykangaroowallaroozebrahorsetigerlioncat.center) |- (rathousemousewildyakdomesticyakcowbluewhalefinbackwhalewesterngreykangarooeasterngreykangaroowallaroozebrahorsetigerlioncat.center);
\node[rotate=-90] (rathousemousewildyakdomesticyakcowbluewhalefinbackwhalewesterngreykangarooeasterngreykangaroowallaroozebrahorsetigerlioncatchimpanzeepygmychimpanzeehumansgorillasumatranorangutanorangutangibbon) at (12.9765625,2.4031014794362715) {\Huge };
\draw  (rathousemousewildyakdomesticyakcowbluewhalefinbackwhalewesterngreykangarooeasterngreykangaroowallaroozebrahorsetigerlioncat.center) |- (rathousemousewildyakdomesticyakcowbluewhalefinbackwhalewesterngreykangarooeasterngreykangaroowallaroozebrahorsetigerlioncatchimpanzeepygmychimpanzeehumansgorillasumatranorangutanorangutangibbon.center);
\draw  (chimpanzeepygmychimpanzeehumansgorillasumatranorangutanorangutangibbon.center) |- (rathousemousewildyakdomesticyakcowbluewhalefinbackwhalewesterngreykangarooeasterngreykangaroowallaroozebrahorsetigerlioncatchimpanzeepygmychimpanzeehumansgorillasumatranorangutanorangutangibbon.center);
\node[rotate=-90] (rathousemousewildyakdomesticyakcowbluewhalefinbackwhalewesterngreykangarooeasterngreykangaroowallaroozebrahorsetigerlioncatchimpanzeepygmychimpanzeehumansgorillasumatranorangutanorangutangibbonwhiterhino) at (17.48828125,2.4196307813874824) {\Huge };
\draw  (rathousemousewildyakdomesticyakcowbluewhalefinbackwhalewesterngreykangarooeasterngreykangaroowallaroozebrahorsetigerlioncatchimpanzeepygmychimpanzeehumansgorillasumatranorangutanorangutangibbon.center) |- (rathousemousewildyakdomesticyakcowbluewhalefinbackwhalewesterngreykangarooeasterngreykangaroowallaroozebrahorsetigerlioncatchimpanzeepygmychimpanzeehumansgorillasumatranorangutanorangutangibbonwhiterhino.center);
\draw  (whiterhino) |- (rathousemousewildyakdomesticyakcowbluewhalefinbackwhalewesterngreykangarooeasterngreykangaroowallaroozebrahorsetigerlioncatchimpanzeepygmychimpanzeehumansgorillasumatranorangutanorangutangibbonwhiterhino.center);
\node[rotate=-90] (longbeakedechidnaplatypus) at (23.5,2.4358913022592628) {\Huge };
\draw  (longbeakedechidna) |- (longbeakedechidnaplatypus.center);
\draw  (platypus) |- (longbeakedechidnaplatypus.center);
\node[rotate=-90] (rathousemousewildyakdomesticyakcowbluewhalefinbackwhalewesterngreykangarooeasterngreykangaroowallaroozebrahorsetigerlioncatchimpanzeepygmychimpanzeehumansgorillasumatranorangutanorangutangibbonwhiterhinolongbeakedechidnaplatypus) at (20.494140625,2.4518916436057037) {\Huge };
\draw  (rathousemousewildyakdomesticyakcowbluewhalefinbackwhalewesterngreykangarooeasterngreykangaroowallaroozebrahorsetigerlioncatchimpanzeepygmychimpanzeehumansgorillasumatranorangutanorangutangibbonwhiterhino.center) |- (rathousemousewildyakdomesticyakcowbluewhalefinbackwhalewesterngreykangarooeasterngreykangaroowallaroozebrahorsetigerlioncatchimpanzeepygmychimpanzeehumansgorillasumatranorangutanorangutangibbonwhiterhinolongbeakedechidnaplatypus.center);
\draw  (longbeakedechidnaplatypus.center) |- (rathousemousewildyakdomesticyakcowbluewhalefinbackwhalewesterngreykangarooeasterngreykangaroowallaroozebrahorsetigerlioncatchimpanzeepygmychimpanzeehumansgorillasumatranorangutanorangutangibbonwhiterhinolongbeakedechidnaplatypus.center);
\node[rotate=-90] (rathousemousewildyakdomesticyakcowbluewhalefinbackwhalewesterngreykangarooeasterngreykangaroowallaroozebrahorsetigerlioncatchimpanzeepygmychimpanzeehumansgorillasumatranorangutanorangutangibbonwhiterhinolongbeakedechidnaplatypusopossum) at (22.7470703125,2.5888892442067126) {\Huge };
\draw  (rathousemousewildyakdomesticyakcowbluewhalefinbackwhalewesterngreykangarooeasterngreykangaroowallaroozebrahorsetigerlioncatchimpanzeepygmychimpanzeehumansgorillasumatranorangutanorangutangibbonwhiterhinolongbeakedechidnaplatypus.center) |- (rathousemousewildyakdomesticyakcowbluewhalefinbackwhalewesterngreykangarooeasterngreykangaroowallaroozebrahorsetigerlioncatchimpanzeepygmychimpanzeehumansgorillasumatranorangutanorangutangibbonwhiterhinolongbeakedechidnaplatypusopossum.center);
\draw  (opossum) |- (rathousemousewildyakdomesticyakcowbluewhalefinbackwhalewesterngreykangarooeasterngreykangaroowallaroozebrahorsetigerlioncatchimpanzeepygmychimpanzeehumansgorillasumatranorangutanorangutangibbonwhiterhinolongbeakedechidnaplatypusopossum.center);
\node[rotate=-90] (rathousemousewildyakdomesticyakcowbluewhalefinbackwhalewesterngreykangarooeasterngreykangaroowallaroozebrahorsetigerlioncatchimpanzeepygmychimpanzeehumansgorillasumatranorangutanorangutangibbonwhiterhinolongbeakedechidnaplatypusopossumgrayseal) at (24.37353515625,3.6943981119338827) {\Huge };
\draw  (rathousemousewildyakdomesticyakcowbluewhalefinbackwhalewesterngreykangarooeasterngreykangaroowallaroozebrahorsetigerlioncatchimpanzeepygmychimpanzeehumansgorillasumatranorangutanorangutangibbonwhiterhinolongbeakedechidnaplatypusopossum.center) |- (rathousemousewildyakdomesticyakcowbluewhalefinbackwhalewesterngreykangarooeasterngreykangaroowallaroozebrahorsetigerlioncatchimpanzeepygmychimpanzeehumansgorillasumatranorangutanorangutangibbonwhiterhinolongbeakedechidnaplatypusopossumgrayseal.center);
\draw  (grayseal) |- (rathousemousewildyakdomesticyakcowbluewhalefinbackwhalewesterngreykangarooeasterngreykangaroowallaroozebrahorsetigerlioncatchimpanzeepygmychimpanzeehumansgorillasumatranorangutanorangutangibbonwhiterhinolongbeakedechidnaplatypusopossumgrayseal.center);
\node[rotate=-90] (rathousemousewildyakdomesticyakcowbluewhalefinbackwhalewesterngreykangarooeasterngreykangaroowallaroozebrahorsetigerlioncatchimpanzeepygmychimpanzeehumansgorillasumatranorangutanorangutangibbonwhiterhinolongbeakedechidnaplatypusopossumgraysealharborseal) at (25.686767578125,3.900676298724979) {\Huge };
\draw  (rathousemousewildyakdomesticyakcowbluewhalefinbackwhalewesterngreykangarooeasterngreykangaroowallaroozebrahorsetigerlioncatchimpanzeepygmychimpanzeehumansgorillasumatranorangutanorangutangibbonwhiterhinolongbeakedechidnaplatypusopossumgrayseal.center) |- (rathousemousewildyakdomesticyakcowbluewhalefinbackwhalewesterngreykangarooeasterngreykangaroowallaroozebrahorsetigerlioncatchimpanzeepygmychimpanzeehumansgorillasumatranorangutanorangutangibbonwhiterhinolongbeakedechidnaplatypusopossumgraysealharborseal.center);
\draw  (harborseal) |- (rathousemousewildyakdomesticyakcowbluewhalefinbackwhalewesterngreykangarooeasterngreykangaroowallaroozebrahorsetigerlioncatchimpanzeepygmychimpanzeehumansgorillasumatranorangutanorangutangibbonwhiterhinolongbeakedechidnaplatypusopossumgraysealharborseal.center);
\end{tikzpicture}
    \end{turn}
}
\caption{EBWT} \label{fig:mtdna_ebwt}
\end{subfigure}
\caption{Single-Link Clustering result on mitochondrial mtDNA data. } 
\label{fig:mtdna_all}
\end{figure}

First, we will evaluate mtDNA data on 28 different species, and use Single-Link Clustering to produce a dendrogram of the species based on their mtDNA. The results are shown in \autoref{fig:mtdna_all}, with both EBWT and BWMD taking under 1 second to perform the clustering. Our goal is not to fully evaluate the quality of each dendrogram, but to show that both methods produce reasonable results in this case, which may be of interest to researchers in bioinformatics. Both EBWT and BWMD do reasonably well at this task, with differing mistakes, advantages, and disadvantages. 

EBWT gets most base level groups correct (e.g., lion, tiger, cat in one group, primates grouped together). There is a failure to properly group the harbor and gray seals as related to each other, and instead act as outliers which SLINK is forced into a cluster at the end at higher cost. EBWT also fails to group the white rhino with other members of the Ferungulates family (e.g., the horse and zebra would have been closest members) \cite{Cao1998}. BWMD was able to correctly pair the seals and placed white rhino with a larger family of Ferungulates (closest to horse, which is correct, and with the cows and yak which are members). But BWMD failed to place the mouse and rat together, and dispersed the zebra and lion from their more appropriate neighbors.

Results with both methods are reasonable. However, the mtDNA task is an easier task. All of the mtDNA sequences are of similar sizes, with the western grey kangaroo being shortest at 15 KB in length and the lion being longest at 17 KB. 
Our theoretical results in \autoref{sec:theoretical_results} would indicate that we may see more significant issues if we had sequences of varying length. We explore this with a dataset of chromosome genomic scaffold DNA for a subset of the species evaluated. We selected one whole scaffold DNA sequence from a random chromosome for the 11 species where this was available. We selected "unplaced genomic scaffold" sequences for 3 remaining species (the yaks and tiger), which is a much shorter and incomplete amount of data. This gives us a minimum sequence size of 22 KB and a maximum of 33 MB. 

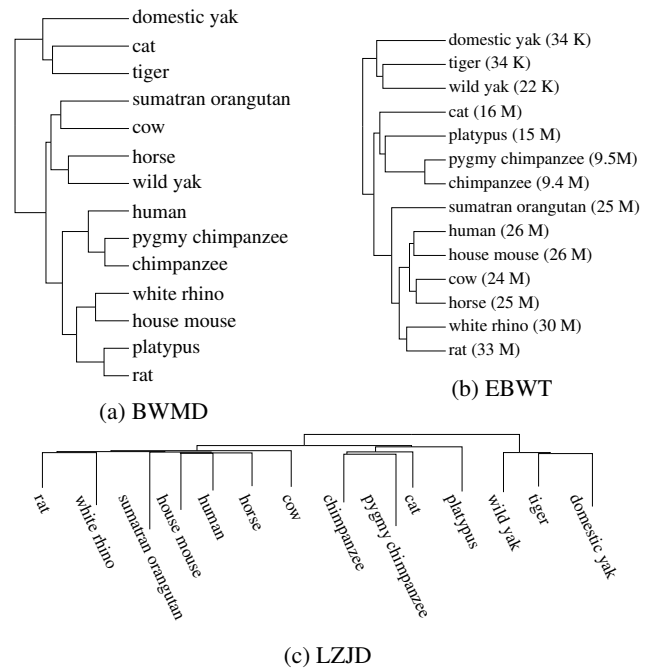
\begin{figure}[!htb]
\centering
\begin{subfigure}{0.45\columnwidth}
\centering
\adjustbox{max width=\columnwidth}{%
    \begin{turn}{90}
\begin{tikzpicture}[sloped]
\tikzset{anchor=west}
\node[rotate=-90] (platypus) at (1.0,0.0) {\Huge platypus};
\node[rotate=-90] (rat) at (0.0,0.0) {\Huge rat};
\node[rotate=-90] (cow) at (9.0,0.0) {\Huge cow};
\node[rotate=-90] (tiger) at (11.0,0.0) {\Huge tiger};
\node[rotate=-90] (horse) at (8.0,0.0) {\Huge horse};
\node[rotate=-90] (housemouse) at (2.0,0.0) {\Huge house mouse};
\node[rotate=-90] (whiterhino) at (3.0,0.0) {\Huge white rhino};
\node[rotate=-90] (cat) at (12.0,0.0) {\Huge cat};
\node[rotate=-90] (sumatranorangutan) at (10.0,0.0) {\Huge sumatran orangutan};
\node[rotate=-90] (domesticyak) at (13.0,0.0) {\Huge domestic yak};
\node[rotate=-90] (chimpanzee) at (4.0,0.0) {\Huge chimpanzee};
\node[rotate=-90] (pygmychimpanzee) at (5.0,0.0) {\Huge pygmy chimpanzee};
\node[rotate=-90] (human) at (6.0,0.0) {\Huge human};
\node[rotate=-90] (wildyak) at (7.0,0.0) {\Huge wild yak};
\node[rotate=-90] (chimpanzeepygmychimpanzee) at (4.5,1.0) {};
\draw  (chimpanzee) |- (chimpanzeepygmychimpanzee.center);
\draw  (pygmychimpanzee) |- (chimpanzeepygmychimpanzee.center);
\node[rotate=-90] (ratplatypus) at (0.5,1.0303651742646316) {};
\draw  (rat) |- (ratplatypus.center);
\draw  (platypus) |- (ratplatypus.center);
\node[rotate=-90] (housemousewhiterhino) at (2.5,1.3393655161434752) {};
\draw  (housemouse) |- (housemousewhiterhino.center);
\draw  (whiterhino) |- (housemousewhiterhino.center);
\node[rotate=-90] (chimpanzeepygmychimpanzeehuman) at (5.25,1.6018285568489243) {};
\draw  (chimpanzeepygmychimpanzee.center) |- (chimpanzeepygmychimpanzeehuman.center);
\draw  (human) |- (chimpanzeepygmychimpanzeehuman.center);
\node[rotate=-90] (ratplatypushousemousewhiterhino) at (1.5,2.02983137560803) {};
\draw  (ratplatypus.center) |- (ratplatypushousemousewhiterhino.center);
\draw  (housemousewhiterhino.center) |- (ratplatypushousemousewhiterhino.center);
\node[rotate=-90] (wildyakhorse) at (7.5,2.326450474402672) {};
\draw  (wildyak) |- (wildyakhorse.center);
\draw  (horse) |- (wildyakhorse.center);
\node[rotate=-90] (ratplatypushousemousewhiterhinochimpanzeepygmychimpanzeehuman) at (3.375,2.5454123990359556) {};
\draw  (ratplatypushousemousewhiterhino.center) |- (ratplatypushousemousewhiterhinochimpanzeepygmychimpanzeehuman.center);
\draw  (chimpanzeepygmychimpanzeehuman.center) |- (ratplatypushousemousewhiterhinochimpanzeepygmychimpanzeehuman.center);
\node[rotate=-90] (cowsumatranorangutan) at (9.5,2.6040188419507335) {};
\draw  (cow) |- (cowsumatranorangutan.center);
\draw  (sumatranorangutan) |- (cowsumatranorangutan.center);
\node[rotate=-90] (tigercat) at (11.5,2.904471198374458) {};
\draw  (tiger) |- (tigercat.center);
\draw  (cat) |- (tigercat.center);
\node[rotate=-90] (wildyakhorsecowsumatranorangutan) at (8.5,2.9611110647077195) {};
\draw  (wildyakhorse.center) |- (wildyakhorsecowsumatranorangutan.center);
\draw  (cowsumatranorangutan.center) |- (wildyakhorsecowsumatranorangutan.center);
\node[rotate=-90] (ratplatypushousemousewhiterhinochimpanzeepygmychimpanzeehumanwildyakhorsecowsumatranorangutan) at (5.9375,3.1433386987434533) {};
\draw  (ratplatypushousemousewhiterhinochimpanzeepygmychimpanzeehuman.center) |- (ratplatypushousemousewhiterhinochimpanzeepygmychimpanzeehumanwildyakhorsecowsumatranorangutan.center);
\draw  (wildyakhorsecowsumatranorangutan.center) |- (ratplatypushousemousewhiterhinochimpanzeepygmychimpanzeehumanwildyakhorsecowsumatranorangutan.center);
\node[rotate=-90] (tigercatdomesticyak) at (12.25,3.258565274112426) {};
\draw  (tigercat.center) |- (tigercatdomesticyak.center);
\draw  (domesticyak) |- (tigercatdomesticyak.center);
\node[rotate=-90] (ratplatypushousemousewhiterhinochimpanzeepygmychimpanzeehumanwildyakhorsecowsumatranorangutantigercatdomesticyak) at (9.09375,4.268340115108067) {};
\draw  (ratplatypushousemousewhiterhinochimpanzeepygmychimpanzeehumanwildyakhorsecowsumatranorangutan.center) |- (ratplatypushousemousewhiterhinochimpanzeepygmychimpanzeehumanwildyakhorsecowsumatranorangutantigercatdomesticyak.center);
\draw  (tigercatdomesticyak.center) |- (ratplatypushousemousewhiterhinochimpanzeepygmychimpanzeehumanwildyakhorsecowsumatranorangutantigercatdomesticyak.center);
\end{tikzpicture}
    \end{turn}
}
\caption{BWMD} \label{fig:scaffold_bwmd}
\end{subfigure}
\hspace*{\fill} %
\begin{subfigure}{0.45\columnwidth}
\centering
\adjustbox{max width=\columnwidth}{%
    \begin{turn}{90}
\begin{tikzpicture}[sloped]
\tikzset{anchor=west}
\node[rotate=-90] (platypus) at (9.0,0.0) {\Huge platypus (15 M) };
\node[rotate=-90] (rat) at (0.0,0.0) {\Huge rat (33 M) };
\node[rotate=-90] (cow) at (3.0,0.0) {\Huge cow (24 M)};
\node[rotate=-90] (horse) at (2.0,0.0) {\Huge horse (25 M)};
\node[rotate=-90] (tiger) at (12.0,0.0) {\Huge tiger (34 K)};
\node[rotate=-90] (housemouse) at (4.0,0.0) {\Huge house mouse (26 M)};
\node[rotate=-90] (whiterhino) at (1.0,0.0) {\Huge white rhino (30 M) };
\node[rotate=-90] (cat) at (10.0,0.0) {\Huge cat (16 M)};
\node[rotate=-90] (sumatranorangutan) at (6.0,0.0) {\Huge sumatran orangutan (25 M)};
\node[rotate=-90] (domesticyak) at (13.0,0.0) {\Huge domestic yak (34 K)};
\node[rotate=-90] (chimpanzee) at (7.0,0.0) {\Huge chimpanzee (9.4 M)};
\node[rotate=-90] (pygmychimpanzee) at (8.0,0.0) {\Huge pygmy chimpanzee (9.5M) };
\node[rotate=-90] (human) at (5.0,0.0) {\Huge human (26 M)};
\node[rotate=-90] (wildyak) at (11.0,0.0) {\Huge wild yak (22 K)};
\node[rotate=-90] (chimpanzeepygmychimpanzee) at (7.5,1.0) {};
\draw  (chimpanzee) |- (chimpanzeepygmychimpanzee.center);
\draw  (pygmychimpanzee) |- (chimpanzeepygmychimpanzee.center);
\node[rotate=-90] (cowhorse) at (2.5,1.3593740009078974) {};
\draw  (cow) |- (cowhorse.center);
\draw  (horse) |- (cowhorse.center);
\node[rotate=-90] (housemousehuman) at (4.5,1.4581088415935863) {};
\draw  (housemouse) |- (housemousehuman.center);
\draw  (human) |- (housemousehuman.center);
\node[rotate=-90] (cowhorsehousemousehuman) at (3.5,1.6411423955696307) {};
\draw  (cowhorse.center) |- (cowhorsehousemousehuman.center);
\draw  (housemousehuman.center) |- (cowhorsehousemousehuman.center);
\node[rotate=-90] (ratwhiterhino) at (0.5,1.7585299398227972) {};
\draw  (rat) |- (ratwhiterhino.center);
\draw  (whiterhino) |- (ratwhiterhino.center);
\node[rotate=-90] (ratwhiterhinocowhorsehousemousehuman) at (2.0,2.070634443262673) {};
\draw  (ratwhiterhino.center) |- (ratwhiterhinocowhorsehousemousehuman.center);
\draw  (cowhorsehousemousehuman.center) |- (ratwhiterhinocowhorsehousemousehuman.center);
\node[rotate=-90] (ratwhiterhinocowhorsehousemousehumansumatranorangutan) at (4.0,2.386294361119891) {};
\draw  (ratwhiterhinocowhorsehousemousehuman.center) |- (ratwhiterhinocowhorsehousemousehumansumatranorangutan.center);
\draw  (sumatranorangutan) |- (ratwhiterhinocowhorsehousemousehumansumatranorangutan.center);
\node[rotate=-90] (chimpanzeepygmychimpanzeeplatypus) at (8.25,2.656940246419104) {};
\draw  (chimpanzeepygmychimpanzee.center) |- (chimpanzeepygmychimpanzeeplatypus.center);
\draw  (platypus) |- (chimpanzeepygmychimpanzeeplatypus.center);
\node[rotate=-90] (wildyaktiger) at (11.5,2.7550581023776273) {};
\draw  (wildyak) |- (wildyaktiger.center);
\draw  (tiger) |- (wildyaktiger.center);
\node[rotate=-90] (chimpanzeepygmychimpanzeeplatypuscat) at (9.125,2.882829166650506) {};
\draw  (chimpanzeepygmychimpanzeeplatypus.center) |- (chimpanzeepygmychimpanzeeplatypuscat.center);
\draw  (cat) |- (chimpanzeepygmychimpanzeeplatypuscat.center);
\node[rotate=-90] (wildyaktigerdomesticyak) at (12.25,3.013099593543114) {};
\draw  (wildyaktiger.center) |- (wildyaktigerdomesticyak.center);
\draw  (domesticyak) |- (wildyaktigerdomesticyak.center);
\node[rotate=-90] (ratwhiterhinocowhorsehousemousehumansumatranorangutanchimpanzeepygmychimpanzeeplatypuscat) at (6.5625,3.1319707997487285) {};
\draw  (ratwhiterhinocowhorsehousemousehumansumatranorangutan.center) |- (ratwhiterhinocowhorsehousemousehumansumatranorangutanchimpanzeepygmychimpanzeeplatypuscat.center);
\draw  (chimpanzeepygmychimpanzeeplatypuscat.center) |- (ratwhiterhinocowhorsehousemousehumansumatranorangutanchimpanzeepygmychimpanzeeplatypuscat.center);
\node[rotate=-90] (ratwhiterhinocowhorsehousemousehumansumatranorangutanchimpanzeepygmychimpanzeeplatypuscatwildyaktigerdomesticyak) at (9.40625,3.599469168593897) {};
\draw  (ratwhiterhinocowhorsehousemousehumansumatranorangutanchimpanzeepygmychimpanzeeplatypuscat.center) |- (ratwhiterhinocowhorsehousemousehumansumatranorangutanchimpanzeepygmychimpanzeeplatypuscatwildyaktigerdomesticyak.center);
\draw  (wildyaktigerdomesticyak.center) |- (ratwhiterhinocowhorsehousemousehumansumatranorangutanchimpanzeepygmychimpanzeeplatypuscatwildyaktigerdomesticyak.center);
\end{tikzpicture}
    \end{turn}
}
\caption{EBWT} \label{fig:scaffold_ebwt}
\end{subfigure}

\begin{subfigure}{0.95\columnwidth}
\centering
\adjustbox{max width=\columnwidth}{%
\begin{tikzpicture}[sloped]
\tikzset{anchor=west}
\node[rotate=-65] (platypus) at (10.0,0.0) {\Large platypus};
\node[rotate=-65] (rat) at (0.0,0.0) {\Large rat};
\node[rotate=-65] (cow) at (6.0,0.0) {\Large cow};
\node[rotate=-65] (tiger) at (12.0,0.0) {\Large tiger};
\node[rotate=-65] (horse) at (5.0,0.0) {\Large horse};
\node[rotate=-65] (housemouse) at (3.0,0.0) {\Large house mouse};
\node[rotate=-65] (whiterhino) at (1.0,0.0) {\Large white rhino};
\node[rotate=-65] (cat) at (9.0,0.0) {\Large cat};
\node[rotate=-65] (sumatranorangutan) at (2.0,0.0) {\Large sumatran orangutan};
\node[rotate=-65] (domesticyak) at (13.0,0.0) {\Large domestic yak};
\node[rotate=-65] (chimpanzee) at (7.0,0.0) {\Large chimpanzee};
\node[rotate=-65] (pygmychimpanzee) at (8.0,0.0) {\Large pygmy chimpanzee};
\node[rotate=-65] (human) at (4.0,0.0) {\Large human};
\node[rotate=-65] (wildyak) at (11.0,0.0) {\Large wild yak};
\node[rotate=-65] (chimpanzeepygmychimpanzee) at (7.5,1.0) {\Large };
\draw  (chimpanzee) |- (chimpanzeepygmychimpanzee.center);
\draw  (pygmychimpanzee) |- (chimpanzeepygmychimpanzee.center);
\node[rotate=-90] (tigerdomesticyak) at (12.5,1.027160471624535) {\Large };
\draw  (tiger) |- (tigerdomesticyak.center);
\draw  (domesticyak) |- (tigerdomesticyak.center);
\node[rotate=-90] (housemousehuman) at (3.5,1.0359105327926805) {\Large };
\draw  (housemouse) |- (housemousehuman.center);
\draw  (human) |- (housemousehuman.center);
\node[rotate=-90] (housemousehumanhorse) at (4.25,1.0430914288877517) {\Large };
\draw  (housemousehuman.center) |- (housemousehumanhorse.center);
\draw  (horse) |- (housemousehumanhorse.center);
\node[rotate=-90] (ratwhiterhino) at (0.5,1.0501739837464532) {\Large };
\draw  (rat) |- (ratwhiterhino.center);
\draw  (whiterhino) |- (ratwhiterhino.center);
\node[rotate=-90] (wildyaktigerdomesticyak) at (11.75,1.0515788868691065) {\Large };
\draw  (wildyak) |- (wildyaktigerdomesticyak.center);
\draw  (tigerdomesticyak.center) |- (wildyaktigerdomesticyak.center);
\node[rotate=-90] (sumatranorangutanhousemousehumanhorse) at (3.125,1.0640527234145338) {\Large };
\draw  (sumatranorangutan) |- (sumatranorangutanhousemousehumanhorse.center);
\draw  (housemousehumanhorse.center) |- (sumatranorangutanhousemousehumanhorse.center);
\node[rotate=-90] (chimpanzeepygmychimpanzeecat) at (8.25,1.0681438416049813) {\Large };
\draw  (chimpanzeepygmychimpanzee.center) |- (chimpanzeepygmychimpanzeecat.center);
\draw  (cat) |- (chimpanzeepygmychimpanzeecat.center);
\node[rotate=-90] (ratwhiterhinosumatranorangutanhousemousehumanhorse) at (1.8125,1.074889940763729) {\Large };
\draw  (ratwhiterhino.center) |- (ratwhiterhinosumatranorangutanhousemousehumanhorse.center);
\draw  (sumatranorangutanhousemousehumanhorse.center) |- (ratwhiterhinosumatranorangutanhousemousehumanhorse.center);
\node[rotate=-90] (ratwhiterhinosumatranorangutanhousemousehumanhorsecow) at (3.90625,1.0971732168565844) {\Large };
\draw  (ratwhiterhinosumatranorangutanhousemousehumanhorse.center) |- (ratwhiterhinosumatranorangutanhousemousehumanhorsecow.center);
\draw  (cow) |- (ratwhiterhinosumatranorangutanhousemousehumanhorsecow.center);
\node[rotate=-90] (chimpanzeepygmychimpanzeecatplatypus) at (9.125,1.191263559321234) {\Large };
\draw  (chimpanzeepygmychimpanzeecat.center) |- (chimpanzeepygmychimpanzeecatplatypus.center);
\draw  (platypus) |- (chimpanzeepygmychimpanzeecatplatypus.center);
\node[rotate=-90] (ratwhiterhinosumatranorangutanhousemousehumanhorsecowchimpanzeepygmychimpanzeecatplatypus) at (6.515625,1.2218176875997924) {\Large };
\draw  (ratwhiterhinosumatranorangutanhousemousehumanhorsecow.center) |- (ratwhiterhinosumatranorangutanhousemousehumanhorsecowchimpanzeepygmychimpanzeecatplatypus.center);
\draw  (chimpanzeepygmychimpanzeecatplatypus.center) |- (ratwhiterhinosumatranorangutanhousemousehumanhorsecowchimpanzeepygmychimpanzeecatplatypus.center);
\node[rotate=-90] (ratwhiterhinosumatranorangutanhousemousehumanhorsecowchimpanzeepygmychimpanzeecatplatypuswildyaktigerdomesticyak) at (9.1328125,1.4929274836750976) {\Large };
\draw  (ratwhiterhinosumatranorangutanhousemousehumanhorsecowchimpanzeepygmychimpanzeecatplatypus.center) |- (ratwhiterhinosumatranorangutanhousemousehumanhorsecowchimpanzeepygmychimpanzeecatplatypuswildyaktigerdomesticyak.center);
\draw  (wildyaktigerdomesticyak.center) |- (ratwhiterhinosumatranorangutanhousemousehumanhorsecowchimpanzeepygmychimpanzeecatplatypuswildyaktigerdomesticyak.center);
\end{tikzpicture}
}
\caption{LZJD} \label{fig:scaffold_lzjd}
\end{subfigure}

\caption{Single-Link Clustering on Genomic Scaffolding. } 
\label{fig:scaffold_all}
\end{figure}

The SLINK results are shown in \autoref{fig:scaffold_all}. At a base level, the cost to perform EBWT and its scalability issues are more pronounced. BWMD takes only 47 seconds to perform SLINK clustering. EBWT took 28 minutes, making it over 35$\times$ slower. When plotting the EBWT dendrogram in \autoref{fig:scaffold_ebwt}, we include the size of the DNA sequence in parentheses. When organized in this way, it becomes clear that the EBWT clustering is degenerate, and corresponds exactly to file-size, rather than content. 

In contrast, the BWMD in \autoref{fig:scaffold_bwmd} produces reasonable groupings despite disparate sequence lengths. For example, (full scaffold) cat and (unplaced and incomplete) tiger are correctly grouped despite the cat sequence being 450$\times$ longer. The BWMD results are not perfect: the orangutan and domestic yak were placed farther from the other (well-grouped) primates and ferungulates respectively. Overall we can see that groups which should be placed near each other are, 
without degrading to sequence length information. 

We include LZJD ( \autoref{fig:scaffold_lzjd} )in this scenario, to demonstrate that BWMD has increased value over other alternatives. Here we can see that LZJD suffers the same failure as EBWT in placing the three smallest partial segments into a single cluster. With the exception of correctly grouping the two chimpanzee species, the LZJD dendrogram is degenerate. 

These results are in line with our 
theory 
derived in \autoref{sec:theoretical_results}. 
When working with sequences of homogeneous length, EBWT performed well. 
But BWMD is able to handle disparate sequence lengths reasonably well, where EBWT degrades to grouping by sequence length rather than content. 

BWMD's ability to handle the original mtDNA data, as well as substantially better results with the irregular-sized scaffold DNA, is made more impressive by the fact that BWMD is encoding everything into $\mathbb{R}^{16}$ due to the small alphabet $|\Sigma|=4$. This is a reduction in storage cost by a factor of up to 515.6$\times$, and allows for more flexibility in creating a larger and searchable index using BWMD.

\subsection{A note on BMWD's Disadvantage}

It is also worth noting that, from an information encoding perspective, BWMD is at a disadvantage in this testing over DNA data. EBWT is dimensionless, and has the representation capacity of the merged sorting of two different strings, meaning its representational capacity is a function of the sequence length under consideration. BWMD encodes each sequence into a fixed-length feature vector of size $|\Sigma|^2$. Since we are working with DNA data, the alphabet $\Sigma=\{A, T, C, G\}$ is quite small. As such all DNA sequences in these experiments, up to 33 MB in size, are being embedded into a 16-dimensional space. 
BWMD's ability to match or outperform EBWT
means we must be doing a significantly better job at leveraging the first-order information expressed by the Burrows Wheeler Transform.

\section{Malware Results} \label{sec:malware}

It can be difficult to reliably parse malicious software as malware authors may intentionally violate rules and format standards. Compression based similarity measures are useful in this case as they allow us to avoid these complex parsing issues. In this section we will look at the classification accuracy of BWMD and LZJD for several datasets. Our expectation is that LZJD will have better classification accuracy, due to Lempel-Ziv compressors being more effective than those based on Burrows Wheeler. However, we find that BWMD has significant advantages when clustering is the goal accuracy by up to 65.6$\times$, and is 24$\times$ faster to search large corpora by obtaining sub-linear scaling  with no loss in accuracy. Since LZJD cannot achieve this, this advantage will only increase with corpus size. 

Many prior works have looked at malware classification and clustering by processing raw bytes, due to the difficulty of parsing malware. 
We will compare with the seminal BitShred algorithm \cite{Jang2011} for clustering. Other similar byte based distance functions such as Ssdeep and Sdhash where evaluated, but founds to have degenerate performance on our smallest and easiest corpora, which can be found in appendix \autoref{sec:malware_app}. Compression distances such as EBWT and NCD are not evaluated in this section, because it would require multiple compute months for our smallest dataset, and simply cannot scale to the size of our malware corpora. 

\begin{table}[!ht]
\caption{Malware datasets used in experiments.  }
\label{tbl:dataset_summary}
\begin{adjustbox}{max size={1.00\columnwidth}{0.85\textheight}}
\begin{tabular}{@{}lrcrrr@{}}
\toprule
Dataset      & \multicolumn{1}{c}{Avg Size} & \multicolumn{1}{c}{\# Classes} & \multicolumn{1}{c}{Train} & \multicolumn{1}{c}{Test} & Storage Size\\ \midrule
EMBER        & 1.17 MB & 2 & 600,000                   & 200,000         & 936 GB         \\
VS20F        & 705 KB & 20 & 160,000                   & 40,000       & 141 GB              \\
Kaggle Bytes & 4.67 MB & 9  & 10,868                   & ---       & 50.8 GB              \\ 
Kaggle ASM   & 13.5 MB & 9  & 10,868                   & ---       & 147 GB              \\ 
Drebin APK   & 1.37 MB & 20 & 4,664                   & ---       & 6.4 GB              \\ 
Drebin TAR   & 1.84 MB & 20 & 4,664                   & ---       & 8.6 GB              \\ 
\bottomrule
\end{tabular}
\end{adjustbox}
\end{table}

For our evaluation, we will use several datasets summarized in \autoref{tbl:dataset_summary}. The EMBER dataset \cite{Anderson2018} pertains to a binary classification problem of "benign vs malicious" for Windows executables. Because there are only two classes, clustering results will not be evaluated on this corpus. However it is by far the largest corpus, allowing us to explore the scalability of our algorithms.  The raw files can be obtained from VirusTotal (\url{www.virustotal.com})
and are nearly 1TB total. Our remaining datasets will be multi-class problems where each sample is a member of a specific malware family. We will use these to evaluate both classification accuracy, as well as accuracy in clustering with respect to the class labels. 
Using VirusShare \cite{VirusShare} we create another Windows based dataset with 20 malware families. The families were determined using VirusTotal and the AVClass tool which determines a single canonical malware family label based on multiple Anti-Virus outputs \cite{Sebastian2016}. We select the 20 most populous families, and use 7,000 examples for training and 3,000 for testing. The last four datasets we use are evaluated in "two forms", following prior work \cite{raff_lzjd_2017}. The Kaggle datasets are from a 2015 Kaggle competition sponsored by Microsoft \cite{microsoft_kaggle_2015}. In the "Bytes" version our algorithms are run on the raw malware binary, and in the "ASM" version the output of IDA-Pro's disassembler is used instead. From the Drebin corpus \cite{Arp2014} we use the 20 most populous families,  where the "APK" version is the raw bytes of the Android APK (essentially a Zip file with light compression), and the "TAR" version which unpacks the APK and recombines all content into a single tar file.

\subsection{Malware Classification}

We begin our analysis by looking at nearest neighbor classification performance of various methods. The performance of each algorithm under this scenario gives us insight not only to its utility, but how effective it would be for analysts in finding similar malware. Utility in this scenario requires both high accuracy \textit{and} computational efficiency, as malware corpora are often measured in the terabyte to petabyte range. 

\subsubsection{Small Scale Malware Classification}
\label{sec:malware_app}

The Kaggle and Drebin corpora are considerably smaller in size, allowing us to test a wider selection of methods against them. In the below table we use balanced accuracy, where the weights of each file are adjusted so that the total weight of each class is equal, because malware families are not evenly distributed in each corpus. 

\begin{table}[!h]
\caption{Balanced Accuracy results for 1-NN classification on each dataset. Results show mean 10-Fold Cross Validation accuracy (standard deviation in parentheses). Best results in \textbf{bold}, second best in \textit{italics}.  }
\label{tbl:family_classification}
\adjustbox{max width=\columnwidth}{%
\begin{tabular}{@{}lccccc@{}}
\toprule
Dataset      & Ssdeep     & Sdhash     & BitShred & BWMD                & LZJD                \\ \midrule
Kaggle Bytes & 38.4 (1.4) & 60.2 (2.3) & 43.7 (1.9) & \textit{96.4 (2.2)} & \textbf{97.9 (1.4)} \\
Kaggle ASM   & 26.6 (2.2) & 28.8 (1.3) & 36.9 (1.6) & \textit{97.0 (1.8)} & \textbf{97.1 (1.7)} \\
Drebin APK   & 13.6 (1.6) & 5.8 (0.5)  & \textit{58.3 (3.9)} & 55.3 (4.4) & \textbf{81.0 (4.0)} \\
Drebin TAR   & 24.2 (2.9) & 8.3 (1.2)  & 65.1 (3.7)  & \textit{76.3 (3.6)} & \textbf{87.9 (2.1)} \\
\bottomrule
\end{tabular}
}
\end{table}

We can see from these results that the compression-based approaches, LZJD and BWMD, generally outperform other alternatives by a wide margin. As was expected, LZJD has higher accuracy that BWMD, since LZJD is based on a more effective compression algorithm. 
While this is a slight weakness of BWMD, its advantage comes in being orders of magnitude faster,
as we will show in the large scale testing in the next section. 
This makes it the only method usable for larger-scale corpora. This also shows that Ssdeep and Sdhash are simply not accurate enough to be considered for use, without regard to computational constraints. 

BWMD performed second best on every dataset, the only exception being a 3 point difference to the BitShred algorithm. However, BWMD outperformed BitShred by at least 11 points on all other datasets. Supporting our theoretical analysis in \autoref{sec:theoretical_results}, we also see hints that BWMD is better equipped to work with extremely long sequences.  Most notably, BWMD is the only method which had improved accuracy and reduced variance when moving from Kaggle Bytes (4.67 MB) to Kaggle ASM (13.5 MB). This suggests that the disassembly may be in a form that allows the BWT to better capture first-order dependencies for compression. 

The fact that BWMD has non-trivial accuracy on Drebin APK (random guessing is 5\%) is particularly impressive and worth noting. This is because the APK files are essentially Zip files with a standard structure, and the Zip compression format is a more effective one than most BWT based methods such as bzip. As such, that there is any first-order information exploitable for effective similarity search is impressive, and indicates the utility of BWMD in wider applications. 

\subsubsection{Large Scale Malware Classification}
\label{sec:larg_scale_mal}

On EMBER we use 9-Nearest Neighbors as our classifier so that we can compute meaningful values for the Area Under the ROC Curve (AUC) \cite{Bradley1997}.
In this malware detection context, AUC can be interpreted as an algorithm's ability to properly prioritize all malicious software above all benign software. This metric is useful for prioritizing work queues, and is therefore particularly pertinent. We evaluate only BWMD and LZJD due to computational constraints on this larger dataset. 

BWMD obtains an AUC of 98.3\%, where LZJD acheives a slightly better 99.7\% AUC. As expected, LZJD obtains a higher accuracy than BWMD, because LZJD is built upon a more effective compression algorithm. However, accuracy in isolation does not determine what method is best to use. Due to the large size of malware corpora, sub-linear scaling is needed to be useful for realistic sized datasets. 

BWMD does have an advantage over LZJD in its ability to scale to large corpora in an efficient manner. This is critical, since industry datasets routinely require comparisons to terabytes of files or more
\cite{Roussev2010}.
Prior work has tried, with limited success, to scale LZJD to larger corpora. Using an extension of the Vantage Point (VP) tree \cite{Yianilos1993}, only a 2.5x speedup over brute force search was achieved\cite{Raff2018_metric_index}. Because BWMD operates by embedding files into a Euclidean space, we can leverage specialized algorithms like the Dynamic Continuous Index (DCI) algorithm \cite{Li2017b} that only work for the Euclidean distance. DCI works by projecting the whole dataset down to different, random, embeddings, and allows obtaining the true nearest neighbors in a fast and efficient manner. 
LZJD is not compatible with such algorithms, resulting in BWMD being better equipped for this task.

\begin{figure}[!htb]
\begin{center}
\centering
\begin{tikzpicture}[]
\begin{axis}[
    xlabel=Training Set Size,
    ylabel=Time (ms),
    legend style={font=\small},
    legend pos=north west,
    legend style={at={(0.00,1.0)},anchor=north west, draw=none,fill=none},
    legend columns=2,
    xmode=log,
    ymode=log,
    height=5cm,
    width=\columnwidth,
    ]
    
    \addplot+[mark=none,line width=1.0pt] table [x=n, y=Brute, ] {ember_scaling.dat};
    
    \addplot+[mark=none,line width=1.0pt] table [x=n, y=VPMV, ] {ember_scaling.dat};
    
    \addplot+[mark=none,line width=1.0pt] table [x=n, y=LZJD_Brute, ] {ember_scaling.dat};
    
    \addplot+[mark=none,line width=1.0pt] table [x=n, y=LZJD_VPMV, ] {ember_scaling.dat};
    
    \addplot+[mark=none,dashed,line width=1.0pt] table [x=n, y=DCI, ] {ember_scaling.dat};

    \legend{BWMD Brute, BWMD VP, LZJD Brute, LZJD VP, BWMD DCI} %

\end{axis}

\end{tikzpicture}
\end{center}
\caption{Table shows 9-NN search retrieval speed on the Ember test set (in milliseconds, y-axis, log-scale) as the number of training points (x-axis, log-scale) increases. 
}
\label{fig:ember_search_scaling}
\end{figure}
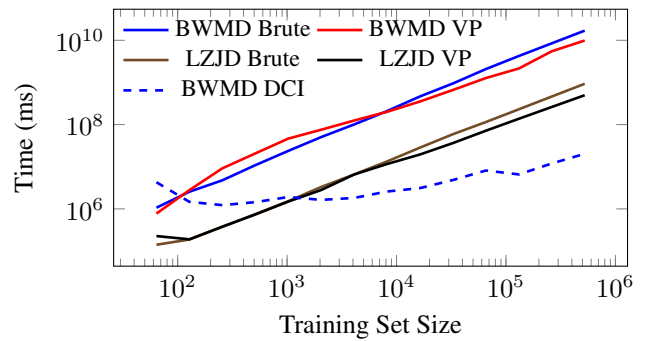

In \autoref{fig:ember_search_scaling} we compare the total query time of BWMD and LZJD under different indices, as the training set size increases from 64 files up to the full 600,000. We found the VP tree of minimum variance \cite{Raff2018_metric_index} performed best compared to other algorithms like KD and Cover-trees, and so only its results are included. In the dashed line we show BWMD accelerated with the Dynamic Continuous Index (DCI) algorithm \cite{Li2017b}.  

We can see that as the training corpus becomes larger, the VP trees are able to get small constant factor speedups, but are not able to reliably prune large portions of the search space. Because BWMD is in Euclidean space, it is the only method able to leverage the DCI algorithm and thus  able to get significant order-of-magnitude search speedups. This combination makes BWMD 24$\times$ faster than LZJD (5.6 CPU hours compared to \textit{5.6 days}), and 834$\times$ faster than BWMD with a brute force search. One can clearly see that DCI's scaling is sub-linear, and its advantage grows with the corpus size. This is obtained with no loss in accuracy on the Ember corpus, making BWMD the only effective approach for scaling to even larger corpora. 

\subsection{Malware Clustering}

In this section we will show that BWMD has significant advantages in terms of clustering malware into families. This benefit comes largely from BWMD mapping sequences into a Euclidean feature space, where we can leverage tried-and-true algorithms like k-means to perform fast and useful 
clustering.
LZJD is incompatible with k-means, 
and similar methods that require an explicit euclidean feature vector. As such LZJD, like BitShred, is constrained to distance based clustering methods like agglomerative clustering. This puts them at a significant disadvantage compared to BWMD. 

Evaluating the quality of our clustering results, we will consider three measures: Homogeneity, Completeness, and V-Measure, as introduced by \textcite{vMeasure} and using the class labels as ground truth cluster assignments. Homogeneity measures how well an algorithm does at making each found cluster as "pure" as possible (i.e., only one class in each cluster). Completeness measures how well an algorithm groups all examples of a class into as few clusters as possible (i.e., all examples of one class in only one cluster). V-Measure is the harmonic average of Homogeneity and Completeness. All three metrics are measured on the scale $[0,1]$, with 0 being worst, and 1 being the maximum score. 

In performing the clustering, we will test using $k=$ the true number of classes
and $k=10\times$ the true number of classes.
The former ($k=C$) is done to judge how well the clustering algorithms are able to recover the underlying ground truth. The latter 
($k=10\cdot C$)
is done as 
it corresponds best to how a malware analyst would desire to use these tools.%
It is easier to over-estimate the number of clusters than to predict the exact value of $k$, and by clustering an analyst would hope to reduce their workload by quickly checking that files in the same cluster are related, and then performing an in-depth analysis on only a few representatives from each cluster \cite{VanHoudnos2017}. For this reason, we consider Homogeneity the most important of the three measures, as it corresponds with how an analyst would use clustering, followed by V-Measure, and then Completeness.

\begin{table}[!h]
\caption{
Clustering performance of BWMD, LZJD, and BitShred. Best results shown in \textbf{bold}. 
}
\label{tbl:cluster_results}
\adjustbox{max width=\columnwidth}{%
\begin{tabular}{lccccccc}
\hline
                              &        & \multicolumn{3}{c}{$k=C$}            & \multicolumn{3}{c}{$k=10\cdot C$}              \\ \cline{3-8} 
Dataset                       & Metric & BWMD           & LZJD           & BitShred       & BWMD           & LZJD           & BitShred       \\ \hline
\multirow{3}{*}{Kaggle Bytes} & V-M    & \textbf{0.581} & 0.352          & 0.007          & \textbf{0.546} & 0.414          & 0.028          \\
                              & Homog  & \textbf{0.597} & 0.254          & 0.003          & \textbf{0.885} & 0.378          & 0.015          \\
                              & Complt & 0.566          & \textbf{0.573} & 0.239          & 0.396          & \textbf{0.457} & 0.265          \\ \cline{2-8} 
\multirow{3}{*}{Kaggle ASM}   & V-M    & \textbf{0.528} & 0.235          & 0.014          & \textbf{0.562} & 0.531          & 0.366          \\
                              & Homog  & \textbf{0.550} & 0.176          & 0.007          & \textbf{0.911} & 0.599          & 0.291          \\
                              & Complt & \textbf{0.508} & 0.351          & 0.257          & 0.407          & 0.477          & \textbf{0.495} \\ \cline{2-8} 
\multirow{3}{*}{Drebin APK}   & V-M    & \textbf{0.307} & 0.219          & 0.095          & \textbf{0.412} & 0.326          & 0.389          \\
                              & Homog  & \textbf{0.296} & 0.172          & 0.054          & \textbf{0.566} & 0.313          & 0.333          \\
                              & Complt & 0.319          & 0.303          & \textbf{0.375} & 0.323          & 0.340          & \textbf{0.468} \\ \cline{2-8} 
\multirow{3}{*}{Drebin TAR}   & V-M    & \textbf{0.403} & 0.248          & 0.065          & \textbf{0.508} & 0.478          & 0.386          \\
                              & Homog  & \textbf{0.416} & 0.177          & 0.036          & \textbf{0.754} & 0.503          & 0.332          \\
                              & Complt & 0.391          & \textbf{0.413} & 0.335          & 0.383          & 0.455          & \textbf{0.460} \\ \cline{2-8} 
\multirow{3}{*}{VS20F}        & V-M    & \textbf{0.353} & 0.009          & 0.009          & \textbf{0.449} & 0.204          & 0.056           \\
                              & Homog  & \textbf{0.328} & 0.005          & 0.005          & \textbf{0.562} & 0.137          & 0.030          \\
                              & Complt & \textbf{0.381} & 0.249          & 0.221          & 0.374          & \textbf{0.400} & 0.378          \\ 
\bottomrule
\end{tabular}
}
\end{table}
BWMD is
the only method that can leverage the k-Means algorithm, and we use Hamerly's variant because it avoids redundant computation while returning the exact same results\cite{Hamerly2010}. For LZJD and BitShred we use Average-Link clustering using a  fast $\bigO(n^2)$ algorithm \cite{Mullner2011}. While the original BitShred paper used Single-Link, we found Average link provided the best results across all metrics for both BitShred and LZJD.  The results are shown in \autoref{tbl:cluster_results}, where we can see BWMD dominates LZJD and BitShred by our most important metrics, Homogeneity and V-Measure\footnote{Not included due to space limitations, BWMD also dominates by Normalized Mutual Information and Adjusted Rand Index.}. BWMD's advantage in this regard is often dramatic. For example, BWMD scores $2.34\times$ better on Homogeneity compared to LZJD when $k=10\cdot C$ on the Kaggle bytes dataset, and $59\times$ better than BitShred. While BWMD does not always perform best by the Completeness metric, it is always competitive with the best scoring method, which is why BWMD dominates by V-Measure 
. The results overall clearly indicate that BWMD provides the best clusterings across multiple datasets, of different encodings, and different numbers of clusters, showing the flexibility of the compression-based approach. 

Because BWMD can leverage the k-means concept and the many efficient algorithms for its computation, it is also the most scalable for these methods. 
LZJD and BitShred are inherently limited by the $\bigO(n^2)$ lower-bound complexity of hierarchical clustering
\footnote{Other alternatives, like k-medoids, have similar $\bigO((n-k)^2 k)$ complexity.}
. For example, BWMD took only 27 minutes to over-cluster the 160,000 files in the VS20F training set, the largest under consideration. This is $17.3\times$ faster than LZJD which took 7.76 hours, and $54.6\times$ faster than Bitshred at just over a day. 

\section{Conclusion} \label{sec:conclusion}

We have developed and introduced the Burrows Wheeler Markov Distance (BWMD), a new distance metric
inspired by the Burrow Wheeler Transform.
A theoretical analysis has shown several ways in which BWMD has better behavior, which is confirmed by showing new abilities for clustering DNA sequences that prior methods could not handle. For malware clustering, we have shown BWMD considerably outperforms prior methods in both speed and accuracy, and BWMD is the only byte-based method which can achieve sub-linear search scaling on larger corpora. 

\bibliography{Mendeley}

\begin{thebibliography}{37}
\providecommand{\natexlab}[1]{#1}
\providecommand{\url}[1]{\texttt{#1}}
\expandafter\ifx\csname urlstyle\endcsname\relax
  \providecommand{\doi}[1]{doi: #1}\else
  \providecommand{\doi}{doi: \begingroup \urlstyle{rm}\Url}\fi

\bibitem[Anderson and Roth(2018)]{Anderson2018}
H.~S. Anderson and P.~Roth.
\newblock {EMBER: An Open Dataset for Training Static PE Malware Machine
  Learning Models}.
\newblock \emph{ArXiv e-prints}, 2018.
\newblock URL \url{http://arxiv.org/abs/1804.04637}.

\bibitem[Apel et~al.(2009)Apel, Bockermann, and Meier]{Apel2009}
M.~Apel, C.~Bockermann, and M.~Meier.
\newblock {Measuring similarity of malware behavior}.
\newblock In \emph{2009 IEEE 34th Conference on Local Computer Networks},
  number October, pages 891--898. IEEE, 10 2009.
\newblock ISBN 978-1-4244-4488-5.
\newblock \doi{10.1109/LCN.2009.5355037}.
\newblock URL \url{http://ieeexplore.ieee.org/document/5355037/}.

\bibitem[Arp et~al.(2014)Arp, Spreitzenbarth, Malte, Gascon, and
  Rieck]{Arp2014}
D.~Arp, M.~Spreitzenbarth, H.~Malte, H.~Gascon, and K.~Rieck.
\newblock {Drebin: Effective and Explainable Detection of Android Malware in
  Your Pocket}.
\newblock \emph{Symposium on Network and Distributed System Security (NDSS)},
  \penalty0 (February):\penalty0 23--26, 2014.
\newblock \doi{10.14722/ndss.2014.23247}.

\bibitem[Bailey et~al.(2007)Bailey, Oberheide, Andersen, Mao, Jahanian, and
  Nazario]{Bailey:2007:ACA:1776434.1776449}
M.~Bailey, J.~Oberheide, J.~Andersen, Z.~M. Mao, F.~Jahanian, and J.~Nazario.
\newblock {Automated Classification and Analysis of Internet Malware}.
\newblock In \emph{Proceedings of the 10th International Conference on Recent
  Advances in Intrusion Detection}, RAID'07, pages 178--197, Berlin,
  Heidelberg, 2007. Springer-Verlag.
\newblock ISBN 3-540-74319-7, 978-3-540-74319-4.
\newblock URL \url{http://dl.acm.org/citation.cfm?id=1776434.1776449}.

\bibitem[Bayer et~al.(2009)Bayer, Comparetti, Hlauschek, Kruegel, and
  Kirda]{Bayer2009}
U.~Bayer, P.~M. Comparetti, C.~Hlauschek, C.~Kruegel, and E.~Kirda.
\newblock {Scalable , Behavior-Based Malware Clustering}.
\newblock \emph{NDSS}, 9, 2009.

\bibitem[Beyer et~al.(1999)Beyer, Goldstein, Ramakrishnan, and
  Shaft]{Beyer:1999:NNM:645503.656271}
K.~S. Beyer, J.~Goldstein, R.~Ramakrishnan, and U.~Shaft.
\newblock {When Is ''Nearest Neighbor'' Meaningful?}
\newblock In \emph{Proceedings of the 7th International Conference on Database
  Theory}, ICDT '99, pages 217--235, London, UK, UK, 1999. Springer-Verlag.
\newblock ISBN 3-540-65452-6.
\newblock URL \url{http://dl.acm.org/citation.cfm?id=645503.656271}.

\bibitem[Borbely(2015)]{Borbely2015}
R.~S. Borbely.
\newblock {On normalized compression distance and large malware}.
\newblock \emph{Journal of Computer Virology and Hacking Techniques}, pages
  1--8, 2015.
\newblock ISSN 2263-8733.
\newblock \doi{10.1007/s11416-015-0260-0}.

\bibitem[Bradley(1997)]{Bradley1997}
A.~P. Bradley.
\newblock {The use of the area under the ROC curve in the evaluation of machine
  learning algorithms}.
\newblock \emph{Pattern Recognition}, 30\penalty0 (7):\penalty0 1145--1159,
  1997.
\newblock ISSN 00313203.
\newblock \doi{10.1016/S0031-3203(96)00142-2}.

\bibitem[Burrows and Wheeler(1994)]{Burrows1994}
M.~Burrows and D.~J. Wheeler.
\newblock {A Block-sorting Lossless Data Compression Algorithm}.
\newblock Technical report, DEC Systems Research Center, Palo Alto, Califorina,
  1994.
\newblock URL
  \url{http://citeseerx.ist.psu.edu/viewdoc/summary?doi=10.1.1.141.5254}.

\bibitem[Cao et~al.(1998)Cao, Janke, Waddell, Westerman, Takenaka, Murata,
  Okada, P{\"{a}}{\"{a}}bo, and Hasegawa]{Cao1998}
Y.~Cao, A.~Janke, P.~J. Waddell, M.~Westerman, O.~Takenaka, S.~Murata,
  N.~Okada, S.~P{\"{a}}{\"{a}}bo, and M.~Hasegawa.
\newblock {Conflict Among Individual Mitochondrial Proteins in Resolving the
  Phylogeny of Eutherian Orders}.
\newblock \emph{Journal of Molecular Evolution}, 47\penalty0 (3):\penalty0
  307--322, 1998.
\newblock ISSN 1432-1432.
\newblock \doi{10.1007/PL00006389}.
\newblock URL \url{https://doi.org/10.1007/PL00006389}.

\bibitem[Cebri{\'{a}}n et~al.(2005)Cebri{\'{a}}n, Alfonseca, Ortega, and
  {others}]{cebrian2005common}
M.~Cebri{\'{a}}n, M.~Alfonseca, A.~Ortega, and {others}.
\newblock {Common pitfalls using the normalized compression distance: What to
  watch out for in a compressor}.
\newblock \emph{Communications in Information {\&} Systems}, 5\penalty0
  (4):\penalty0 367--384, 2005.

\bibitem[Cilibrasi and Vitanyi(2005)]{Cilibrasi:2005:CC:2263422.2271649}
R.~Cilibrasi and P.~M.~B. Vitanyi.
\newblock {Clustering by Compression}.
\newblock \emph{IEEE Transactions on Information Theory}, 51\penalty0
  (4):\penalty0 1523--1545, 4 2005.
\newblock ISSN 0018-9448.
\newblock \doi{10.1109/TIT.2005.844059}.
\newblock URL \url{http://dx.doi.org/10.1109/TIT.2005.844059}.

\bibitem[Ferragina and Manzini(2005)]{Ferragina:2005:ICT:1082036.1082039}
P.~Ferragina and G.~Manzini.
\newblock {Indexing Compressed Text}.
\newblock \emph{J. ACM}, 52\penalty0 (4):\penalty0 552--581, 7 2005.
\newblock ISSN 0004-5411.
\newblock \doi{10.1145/1082036.1082039}.
\newblock URL \url{http://doi.acm.org/10.1145/1082036.1082039}.

\bibitem[Hamerly(2010)]{Hamerly2010}
G.~Hamerly.
\newblock {Making k-means even faster}.
\newblock In \emph{SIAM International Conference on Data Mining (SDM)}, pages
  130--140, 2010.
\newblock URL
  \url{http://72.32.205.185/proceedings/datamining/2010/dm10_012_hamerlyg.pdf}.

\bibitem[Jang et~al.(2011)Jang, Brumley, and Venkataraman]{Jang2011}
J.~Jang, D.~Brumley, and S.~Venkataraman.
\newblock {BitShred: Feature Hashing Malware for Scalable Triage and Semantic
  Analysis}.
\newblock In \emph{Proceedings of the 18th ACM conference on Computer and
  communications security - CCS}, pages 309--320, New York, New York, USA,
  2011. ACM Press.
\newblock ISBN 9781450309486.
\newblock \doi{10.1145/2046707.2046742}.
\newblock URL \url{http://dl.acm.org/citation.cfm?doid=2046707.2046742}.

\bibitem[Karim et~al.(2005)Karim, Walenstein, Lakhotia, and Parida]{Karim2005}
M.~E. Karim, A.~Walenstein, A.~Lakhotia, and L.~Parida.
\newblock {Malware phylogeny generation using permutations of code}.
\newblock \emph{Journal in Computer Virology}, 1\penalty0 (1):\penalty0 13--23,
  2005.
\newblock ISSN 1772-9904.
\newblock \doi{10.1007/s11416-005-0002-9}.
\newblock URL \url{http://dx.doi.org/10.1007/s11416-005-0002-9}.

\bibitem[Keogh et~al.(2004)Keogh, Lonardi, and
  Ratanamahatana]{Keogh:2004:TPD:1014052.1014077}
E.~Keogh, S.~Lonardi, and C.~A. Ratanamahatana.
\newblock {Towards Parameter-free Data Mining}.
\newblock In \emph{Proceedings of the Tenth ACM SIGKDD International Conference
  on Knowledge Discovery and Data Mining}, KDD '04, pages 206--215, New York,
  NY, USA, 2004. ACM.
\newblock ISBN 1-58113-888-1.
\newblock \doi{10.1145/1014052.1014077}.
\newblock URL \url{http://doi.acm.org/10.1145/1014052.1014077}.

\bibitem[Li and Malik(2017)]{Li2017b}
K.~Li and J.~Malik.
\newblock {Fast k-Nearest Neighbour Search via Prioritized DCI}.
\newblock In \emph{Thirty-fourth International Conference on Machine Learning
  (ICML)}, 2017.
\newblock URL \url{http://arxiv.org/abs/1703.00440}.

\bibitem[Li et~al.(2004)Li, Chen, Li, Ma, and Vitanyi]{Li2004}
M.~Li, X.~Chen, X.~Li, B.~Ma, and P.~M. Vitanyi.
\newblock {The Similarity Metric}.
\newblock \emph{IEEE Transactions on Information Theory}, 50\penalty0
  (12):\penalty0 3250--3264, 2004.
\newblock ISSN 0018-9448.
\newblock \doi{10.1109/TIT.2004.838101}.

\bibitem[Mantaci et~al.(2005)Mantaci, Restivo, Rosone, and
  Sciortino]{Mantaci:2005:EBW:2134629.2134645}
S.~Mantaci, A.~Restivo, G.~Rosone, and M.~Sciortino.
\newblock {An Extension of the Burrows Wheeler Transform and Applications to
  Sequence Comparison and Data Compression}.
\newblock In \emph{Proceedings of the 16th Annual Conference on Combinatorial
  Pattern Matching}, CPM'05, pages 178--189, Berlin, Heidelberg, 2005.
  Springer-Verlag.
\newblock ISBN 3-540-26201-6, 978-3-540-26201-5.
\newblock \doi{10.1007/11496656{\_}16}.
\newblock URL \url{http://dx.doi.org/10.1007/11496656_16}.

\bibitem[Mantaci et~al.(2007)Mantaci, Restivo, Rosone, and
  Sciortino]{Mantaci2007}
S.~Mantaci, A.~Restivo, G.~Rosone, and M.~Sciortino.
\newblock {An extension of the Burrows–Wheeler Transform}.
\newblock \emph{Theoretical Computer Science}, 387\penalty0 (3):\penalty0
  298--312, 11 2007.
\newblock ISSN 0304-3975.
\newblock \doi{10.1016/J.TCS.2007.07.014}.
\newblock URL
  \url{https://www.sciencedirect.com/science/article/pii/S0304397507005282}.

\bibitem[Mantaci et~al.(2008)Mantaci, Restivo, and Sciortino]{Mantaci2008}
S.~Mantaci, A.~Restivo, and M.~Sciortino.
\newblock {Distance measures for biological sequences: Some recent approaches}.
\newblock \emph{International Journal of Approximate Reasoning}, 47\penalty0
  (1):\penalty0 109--124, 1 2008.
\newblock ISSN 0888-613X.
\newblock \doi{10.1016/J.IJAR.2007.03.011}.
\newblock URL
  \url{https://www.sciencedirect.com/science/article/pii/S0888613X07000382}.

\bibitem[Manzini(2001)]{Manzini:2001:ABT:382780.382782}
G.~Manzini.
\newblock {An Analysis of the Burrows-Wheeler Transform}.
\newblock \emph{Journal of the ACM}, 48\penalty0 (3):\penalty0 407--430, 5
  2001.
\newblock ISSN 0004-5411.
\newblock \doi{10.1145/382780.382782}.
\newblock URL \url{http://doi.acm.org/10.1145/382780.382782}.

\bibitem[Martignoni et~al.(2007)Martignoni, Christodorescu, and
  Jha]{Martignoni2007}
L.~Martignoni, M.~Christodorescu, and S.~Jha.
\newblock {OmniUnpack: Fast, Generic, and Safe Unpacking of Malware}.
\newblock In \emph{Twenty-Third Annual Computer Security Applications
  Conference (ACSAC 2007)}, pages 431--441. IEEE, 12 2007.
\newblock ISBN 0-7695-3060-5.
\newblock \doi{10.1109/ACSAC.2007.15}.
\newblock URL \url{http://ieeexplore.ieee.org/document/4413009/}.

\bibitem[M{\"{u}}llner(2011)]{Mullner2011}
D.~M{\"{u}}llner.
\newblock {Modern hierarchical, agglomerative clustering algorithms}.
\newblock \emph{arXiv preprint arXiv:1109.2378}, 2011.
\newblock URL \url{http://arxiv.org/abs/1109.2378}.

\bibitem[Raff and Nicholas(2017{\natexlab{a}})]{raff_lzjd_2017}
E.~Raff and C.~Nicholas.
\newblock {An Alternative to NCD for Large Sequences, Lempel-Ziv Jaccard
  Distance}.
\newblock In \emph{Proceedings of the 23rd ACM SIGKDD International Conference
  on Knowledge Discovery and Data Mining - KDD '17}, pages 1007--1015, New
  York, New York, USA, 2017{\natexlab{a}}. ACM Press.
\newblock ISBN 9781450348874.
\newblock \doi{10.1145/3097983.3098111}.
\newblock URL \url{http://dl.acm.org/citation.cfm?doid=3097983.3098111}.

\bibitem[Raff and Nicholas(2017{\natexlab{b}})]{raff_shwel}
E.~Raff and C.~Nicholas.
\newblock {Malware Classification and Class Imbalance via Stochastic Hashed
  LZJD}.
\newblock In \emph{Proceedings of the 10th ACM Workshop on Artificial
  Intelligence and Security}, AISec '17, pages 111--120, New York, NY, USA,
  2017{\natexlab{b}}. ACM.
\newblock ISBN 978-1-4503-5202-4.
\newblock \doi{10.1145/3128572.3140446}.
\newblock URL \url{http://doi.acm.org/10.1145/3128572.3140446}.

\bibitem[Raff and Nicholas(2018{\natexlab{a}})]{Raff2018_metric_index}
E.~Raff and C.~Nicholas.
\newblock {Toward Metric Indexes for Incremental Insertion and Querying}.
\newblock \emph{arXiv}, 2018{\natexlab{a}}.
\newblock URL \url{http://arxiv.org/abs/1801.05055}.

\bibitem[Raff and Nicholas(2018{\natexlab{b}})]{raff_lzjd_digest}
E.~Raff and C.~K. Nicholas.
\newblock {Lempel-Ziv Jaccard Distance, an effective alternative to ssdeep and
  sdhash}.
\newblock \emph{Digital Investigation}, 2 2018{\natexlab{b}}.
\newblock ISSN 17422876.
\newblock \doi{10.1016/j.diin.2017.12.004}.
\newblock URL \url{https://doi.org/10.1016/j.diin.2017.12.004}.

\bibitem[Raff et~al.(2019)Raff, Aurelio, and Nicholas]{pylzjd-proc-scipy-2019}
E.~Raff, J.~Aurelio, and C.~Nicholas.
\newblock {PyLZJD: An Easy to Use Tool for Machine Learning}.
\newblock In C.~Calloway, D.~Lippa, D.~Niederhut, and D.~Shupe, editors,
  \emph{Proceedings of the 18th Python in Science Conference}, pages 97--102,
  2019.
\newblock \doi{10.25080/Majora-7ddc1dd1-00e}.
\newblock URL
  \url{http://conference.scipy.org/proceedings/scipy2019/pylzjd.html}.

\bibitem[Roberts(2011)]{VirusShare}
J.-M. Roberts.
\newblock {Virus Share}, 2011.
\newblock URL \url{https://virusshare.com/}.

\bibitem[Ronen et~al.(2018)Ronen, Radu, Feuerstein, Yom-Tov, and
  Ahmadi]{microsoft_kaggle_2015}
R.~Ronen, M.~Radu, C.~Feuerstein, E.~Yom-Tov, and M.~Ahmadi.
\newblock {Microsoft Malware Classification Challenge}, 2018.
\newblock URL \url{https://www.kaggle.com/c/malware-classification/}.

\bibitem[Rosenberg and Hirschberg(2007)]{vMeasure}
A.~Rosenberg and J.~Hirschberg.
\newblock {V-Measure: A Conditional Entropy-Based External Cluster Evaluation
  Measure}.
\newblock In \emph{Proceedings of the 2007 Joint Conference on Empirical
  Methods in Natural Language Processing and Computational Natural Language
  Learning(EMNLP-CoNLL)}, pages 410--420, 2007.

\bibitem[Roussev(2010)]{Roussev2010}
V.~Roussev.
\newblock {Data Fingerprinting with Similarity Digests}.
\newblock In K.-P. Chow and S.~Shenoi, editors, \emph{Advances in Digital
  Forensics VI: Sixth IFIP WG 11.9 International Conference on Digital
  Forensics, Hong Kong, China, January 4-6, 2010, Revised Selected Papers},
  pages 207--226. Springer Berlin Heidelberg, Berlin, Heidelberg, 2010.
\newblock ISBN 978-3-642-15506-2.
\newblock \doi{10.1007/978-3-642-15506-2{\_}15}.
\newblock URL \url{http://dx.doi.org/10.1007/978-3-642-15506-2_15}.

\bibitem[Sebasti{\'{a}}n et~al.(2016)Sebasti{\'{a}}n, Rivera, Kotzias, and
  Caballero]{Sebastian2016}
M.~Sebasti{\'{a}}n, R.~Rivera, P.~Kotzias, and J.~Caballero.
\newblock {AVclass: A Tool for Massive Malware Labeling}.
\newblock In F.~Monrose, M.~Dacier, G.~Blanc, and J.~Garcia-Alfaro, editors,
  \emph{Research in Attacks, Intrusions, and Defenses: 19th International
  Symposium, RAID 2016}, pages 230--253. Springer International Publishing,
  Paris, France, 2016.
\newblock ISBN 978-3-319-45719-2.
\newblock \doi{10.1007/978-3-319-45719-2{\_}11}.
\newblock URL \url{http://dx.doi.org/10.1007/978-3-319-45719-2_11}.

\bibitem[VanHoudnos et~al.(2017)VanHoudnos, Casey, French, Lindauer, Kanal,
  Wright, Woods, Moon, Jansen, and Carbonell]{VanHoudnos2017}
N.~VanHoudnos, W.~Casey, D.~French, B.~Lindauer, E.~Kanal, E.~Wright, B.~Woods,
  S.~Moon, P.~Jansen, and J.~Carbonell.
\newblock {This Malware Looks Familiar: Laymen Identify Malware Run-time
  Similarity with Chernoff faces and Stick Figures}.
\newblock In \emph{10th EAI International Conference on Bio-inspired
  Information and Communications Technologies (formerly BIONETICS)}, 2017.
\newblock \doi{10.4108/eai.22-3-2017.152417}.

\bibitem[Yianilos(1993)]{Yianilos1993}
P.~Yianilos.
\newblock {Data structures and algorithms for nearest neighbor search in
  general metric spaces}.
\newblock In \emph{Proceedings of the fourth annual ACM-SIAM Symposium on
  Discrete algorithms}, page 311–321. Society for Industrial and Applied
  Mathematics, 1993.
\newblock URL \url{http://dl.acm.org/citation.cfm?id=313789}.

\end{thebibliography}
\bibliographystyle{abbrvnat}

\clearpage

\begin{appendices}

\section{Compression Distances With High Entropy}

As a final analysis, we will look at the behavior of EBWT, LZJD, and our new BWMD in cases where input strings have high entropy. Such situations are common in malware classification, due to the use of "packing", where a malware author will use compression and/or encryption to either reduce a file's size or hide its contents \cite{Martignoni2007}. 

\subsection{Similarity of Two Random Files}

First we will consider the behavior when measuring the distance between two files, both of which are completely uniform random strings of characters, and so have a maximal entropy $-\sum_{\forall s \in \Sigma} -p_s \log_{|\Sigma|}(p_s) = 1$, where $p_s$ is the probability of observing character $s$ from the alphabet $\Sigma$. For an empirical perspective, in \autoref{fig:contour_rand_files_dist} we show a contour plot of the normalized distance (i.e., scaled to be in the range $[0,1]$) over an alphabet $|\Sigma|=32$. On the x and y axis are the lengths of the random files being compared, and the value plotted is the average normalized distance over 30 trials of different random sequences for each trial. 
In the plot, dark blue indicates a near zero distance, and dark red a distance near one. 

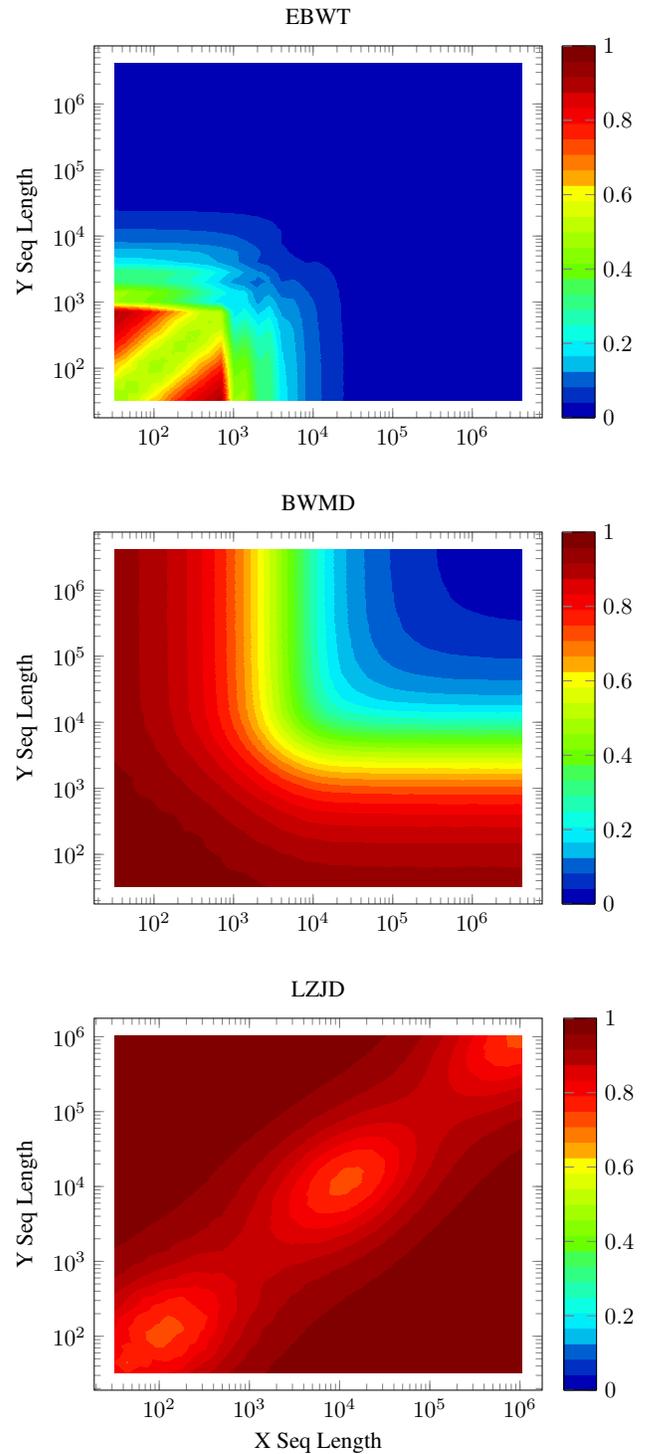
\begin{figure}[!htb]
\centering
\begin{adjustbox}{max size={\columnwidth}{\textheight}}
  \begin{tikzpicture}
  \begin{groupplot}[
      group style={
        	group name=myplot, group size=1 by 3,
			vertical sep=1.75cm,
        },
        enlarge x limits=true,
      ]
    \centering
	\nextgroupplot[
		title=EBWT,
		ylabel=Y Seq Length,
        ymode = log,
        xmode = log,
        colormap/bluered,
        colorbar sampled,
		view={0}{90},
		point meta min=0.0,
        point meta max=1.0,
        enlarge x limits=0.05,
		enlarge y limits=0.05,
	]
	\addplot[
        point meta=\thisrow{z},
        contour filled={
            number=28,
        }
		] 
		table {ebwt_rand.dat};
		
	\nextgroupplot[
		title=BWMD,
		ylabel=Y Seq Length,
        ymode = log,
        xmode = log,
        colormap/bluered,
        colorbar sampled,
		view={0}{90},
		point meta min=0.0,
        point meta max=1.0,
        enlarge x limits=0.05,
		enlarge y limits=0.05,
	]
	\addplot[
        point meta=\thisrow{z},
        contour filled={
            number=28,
        }
		] 
		table {bwmd_rand.dat};
		
	\nextgroupplot[
		title=LZJD,
		xlabel=X Seq Length,
		ylabel=Y Seq Length,
        ymode = log,
        xmode = log,
        colormap/bluered,
        colorbar sampled,
		view={0}{90},
		point meta min=0.0,
        point meta max=1.0,
        enlarge x limits=0.05,
		enlarge y limits=0.05,
	]
	\addplot[
        point meta=\thisrow{z},
        contour filled={
            number=28,
        }
		] 
		table {lzjd_rand.dat};
		

    \end{groupplot}

  \end{tikzpicture}
\end{adjustbox}
  \caption{Distance between two random files of different lengths.  }
  \label{fig:contour_rand_files_dist}
\end{figure}

For EBWT, the intuition is that random content will result in a random ordering in the EBWT merge step, and thus have a few source repetitions, and ultimately produce a small normalized distance (i.e., $ebwt(u,v)/(|u|+|v|-2)$). We can see that EBWT follows this behavior, but has an unusual and undesirable pattern of non-zero values for smaller sequences ($|u| \leq 10^3$). Along the diagonal of equal-length files, a value of $\approx 0.4$ occurs, but then increases on the off diagonal, before decreasing again once the size $|u| > 10^3$. 

For BWMD, this analysis is more grounded. We know that for a random sequence, the transformed BWT has nothing it can compress, and will result in a different  permutation of the still random string. This means the distribution of the prefix to every substring will not change (i.e., the preceding character is random, which is true by definition), and so we expect the transition vector $x$ in step 3 of the BWMD to converge toward the uniform distribution as the size of the input string increases. So in the limit, we know two random files will receive a distance near zero. However, if one (or both) of the random  files are sufficiently small, a large distance will occur due to a lack of convergence toward the uniform distributions. The rate of change will depend on the alphabet size $|\Sigma|$, and can be described as the expected value of the differences between transformed multinomial distributions. 
While bounding this value mathematically is difficult, we can see in \autoref{fig:contour_rand_files_dist} that this expectation holds true empirically. 

Because LZJD measures the similarity of compression dictionaries from the Lempel Ziv algorithm, it was assumed that distances would become smaller as the size of a random sequence increased, due to the accumulation of all smallest substrings \cite{raff_lzjd_2017}. We see in practice that this does not quite occur. For sequences of significantly different length, LZJD returns a distance near 1, in part due to the size difference forcing the intersection to be a small portion of the union of two files. When two random sequences are of a more similar length, we see that the LZJD distance does begin to drop to nearly $\approx 0.6$, but still remains relatively high. This makes sense due to the random component of the input strings. As the set of all shortest substrings is constructed, consider the scenario where the compression dictionary has all substrings of length $\alpha$, but has only processed half of the total sequence. For the remaining half, we will obtain an essentially random sampling of strings of length $\alpha+1$, of which there are $|\Sigma|$ as many strings of increased length (one for every character in the alphabet $\Sigma$, times every string of length $\alpha$). If two strings are of the same length, the likelihood of seeing the same strings of length $\alpha+1$ then becomes small, increasing the distance, while all strings of size $\leq \alpha$ increase the similarity and reduce distance. 

Between BWMD and LZJD, we have two opposing styles of measuring the distance between two high-entropy sequences. As was noted in \cite{raff_lzjd_2017}, one could argue that two random sequences should have high distance because they share no underlying information (this is how NID would behave), or lower distance because both sequences share the property of "looking random". While we argue that EBWT has a more confusing and inferior behavior, a preference to BWMD or LZJD's behavior may be task-specific. 

\subsection{High Entropy vs Low Entropy}

An important distinction in behaviors on random sequences is when a uniform random sequence (max entropy=1) is compared to a second sequence with a different level of entropy. For BWMD, we can derive intuition about its behavior in this case from work on analyzing the BWT compression algorithm. In particular, it has been shown that the BWT's efficiency at compression of $k$'th order combinations over the alphabet $\Sigma$ is bounded by $k$'th order empirical entropy \cite{Manzini:2001:ABT:382780.382782}. Thus we should anticipate that BWMD will produce differing Markov transition matrices, and thus differing distances, as the entropy of the second sequence changes. We confirm this empirically in \autoref{fig:entropy_change}\footnote{The results are shown only for the case of two sequences of the same specific length. The general trend, that BWMD's distance changes as the entropy of one sequence changes, holds for other sequence length combinations as well --- but such results are excluded due to space limitations. }. 

\begin{figure}[!htb]
\begin{center}
\centering
\begin{tikzpicture}[]
\begin{axis}[
    xlabel=Target Sequence Entropy,
    ylabel=Average Distance,
    legend pos=south east,
    legend style={at={(0.05,0.45)},anchor=west},
    legend columns=1,
    height=4.5cm,
    width=\columnwidth,
    ]
    
    \addplot+[mark=none,line width=2.0pt] table [x=Entropy, y=EBWT, ] {entropy_var.dat};
    
    \addplot+[mark=none,line width=2.0pt] table [x=Entropy, y=BWMD, ] {entropy_var.dat};
    
    \addplot+[mark=none,line width=2.0pt] table [x=Entropy, y=LZJD, ] {entropy_var.dat};

    \legend{EBWT, BWMD, LZJD} 

\end{axis}

\end{tikzpicture}
\end{center}
\caption{Average distance (y-axis) between a primary sequence of length $10^5$ that is uniformly random (entropy=1) and a second sequence of a specific entropy between 0 and 1 (x-axis), and also of length $10^5$. }
\label{fig:entropy_change}
\end{figure}
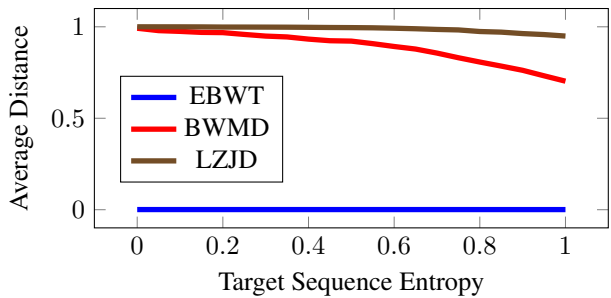

We can see that LZJD and EBWT do not share this desirable property of differing distance with differing entropy. Given a single sequence of high entropy, LZJD will deem almost all other entropy sequences equidistantly far, and EBWT all other entropy sequences equidistantly near. In the case of LZJD, this results in behavior effectively equivalent to when the dimension of a problem continues to increase, and all other points becoming increasing equidistant to each other \cite{Beyer:1999:NNM:645503.656271}. We believe BWMD's ability to delineate degrees of similarity/distance based on relative entropy levels is an important factor toward its ability to provide better clustering results compared to LZJD, which will be shown later in \autoref{sec:malware}.

\end{appendices}

\end{document}